\newtheorem{theorem}{\bf Theorem}[section]
\newtheorem{lemma}[theorem]{\bf Lemma}
\newtheorem{prop}[theorem]{\bf Proposition}
\newtheorem{note}[theorem]{\bf Note}
\newtheorem{remark}[theorem]{\bf Remark}
\newtheorem{question}[theorem]{\bf Problem}
\newtheorem{defi}[theorem]{\bf Definition}
\newenvironment{proof}{\noindent{\em Proof:}}{\quad \hfill$\Box$\vspace{2ex}}
\newcommand{\C}{\mathbb{C}}
\begin{document}

\title{Two-step PR-scheme for  recovering   signals in detectable  union of cones by magnitude measurements}

\author{Youfa~Li, Deguang~Han
\thanks{Copyright (c) 2018 IEEE}
\thanks{Youfa~Li. College of Mathematics and Information Science,
Guangxi University,  Naning, China.   Email: youfalee@hotmail.com}
\thanks{Deguang~Han. Department of Mathematics,
University of Central Florida, Orlando, FL 32816. Email:deguang.han@ucf.edu}

\thanks{Youfa Li is partially supported by Natural Science Foundation of China (Nos: 61561006, 11501132), Natural Science Foundation of Guangxi (No: 2016GXNSFAA380049)
and the talent project of  Education Department of Guangxi Government  for Young-Middle-Aged backbone teachers. Deguang Han  is partially supported by the NSF grant DMS-1403400 and DMS-1712602.}}

\date{}

\maketitle

\begin{abstract}
Motivated by the  research on sampling problems for a union of subspaces (UoS), we investigate in this paper the phase-retrieval problem for the signals that are residing in a union of (finitely generated) cones (UoC for short) in $\mathbb{R}^{n}$. We propose a two-step PR-scheme: $\hbox{PR}=\hbox{detection}+\hbox{recovery}$. We first establish a sufficient and necessary condition
for the detectability  of a UoC, and then design a detection  algorithm  that allows us to determine the cone where the target signal  is residing. The phase-retrieval will be then performed within the detected cone, which can be achieved by using at most $\Gamma$-number of measurements and  with very  low complexity, where $\Gamma (\leq n)$ is the maximum  of the ranks of the generators for  the UoC.  Numerical experiments are provided to demonstrate  the efficiency of our approach, and  to exhibit  comparisons  with some existing phase-retrieval methods.
\end{abstract}

\bigskip
\begin{IEEEkeywords} phase retrieval, union of cones, circulant matrix, FFT, computational complexity,  the amount of measurements.\end{IEEEkeywords}
\bigskip


\section{Introduction}\label{section1}
Phase-retrieval is a nonlinear  problem that seeks to
recover  a signal   $\textbf{x}$, up to a global phase ambiguity,  from   the magnitudes of  its  linear  measurements \begin{align}\notag  b_{i}:=|\langle \mathbf{x}, a_{i}\rangle|, i=1, \ldots, m.\end{align}
Phase-retrieval has been
widely applied in many applications such as  X-ray crystallography (\cite{Drenth}), quantum tomography (\cite{Heinosaarri}),  audio  processing (\cite{Becchetti})
and frame theory (\cite{Ba1,Ba3,Ba4,Sun3}).

Besides phase-retrieval,
the sampling theory for  a  union of subspaces   (UoS for short) is another important sampling problem  (c.f. \cite{LuDo,UOS1,Xampling}).  In signal processing, while traditionally we work on signals in a single linear space or subspace, there are practical demands requiring us to deal with the signals that lie in a UoS. A typical example is the sparse signal recovering or compressed sensing (c.f. \cite{Donoho}) where the signals are sitting in the finite union of  (small dimensional) subspaces.  M. Mishali,  Y. Eldar and  A. Elron \cite{Xampling}
established Xampling for recovering signals in the  UoS of $L^{2}(\mathbb{R})$. By  Xampling,  the target  subspace where the  signal sits   is detected before recovery.
As mention in \cite{Xampling}, the detection can considerably reduce the computational complexity  and measurement cost (the sampling rate).  Note that the phase information of  the measurements in \cite{Xampling}
is assumed   known.
Motivated by \cite{Xampling}  we will study the phase-retrieval problem  for the union of cones (UoC for short).  In order to introduce  the main problems and discuss our main contributions, we  need to recall and establish some notations and definitions.

\subsection {Notations and definitions}
We use  boldface letters   to denote column  vectors, e.g., $\textbf{x},$ calligraphic
and upper-case  letters to denote matrices (operator), e.g., $\mathcal{X}$, and underlined  letters
to denote a random variable, e.g., $\underline{\epsilon}$. For a matrix $\mathcal{X}$, its Hermitian transpose and  transpose
are  denoted by $\mathcal{X}^{*}$ and  $\mathcal{X}^{T}$, respectively.
For a linear operator $\mathfrak{P}$ from a vector space  $H_{1}$ to another  vector space $H_{2}$, we denote by $\mathcal{R}(\mathfrak{P})$ and $\mathcal{N}(\mathfrak{P})$
the range and null spaces of $\mathfrak{P}$, respectively. Moreover, for a set $S\subset \mathcal{R}(\mathfrak{P})$, denote by $\hbox{invim}(S)$
the inverse image of $S.$  For  a   vector $\textbf{x} \in \mathbb{R}^{m}$,  $\textbf{x}\succ 0 \ (\textbf{x}\prec0)$ implies every coordinate of $\textbf{x}$ is strictly larger (smaller) than $0$. Denote  $\mathbb{R}^{+,m}:=\{\textbf{x}\in \mathbb{R}^{m}: \textbf{x}\succ 0\}$. The standard orthornormal basis for $\mathbb{R}^{n}$
is denoted by $\{\textbf{e}_{1}, \ldots, \textbf{e}_{n}\}$.

For  a  matrix  $\mathcal{X}=[\textbf{x}_{1},\ldots, \textbf{x}_{m}]\in \mathbb{R}^{n\times m}$, we denote by $\textbf{\hbox{cone}}(\mathcal{X})$ the  cone generated from  its column vectors, i.e.,
\begin{align}\label{tukjian} \textbf{\hbox{cone}}(\mathcal{X}):=\{\theta_{1}\textbf{x}_{1}+\cdots+\theta_{m}\textbf{x}_{m}|  \theta_{i}\geq0, i=1, \ldots, m\}.\end{align}
A finite set  of vectors $\{\textbf{x}_{1},\ldots, \textbf{x}_{m} \}\subseteq \mathbb{R}^{n}$  is called a {\it
frame} for $\mathbb{R}^{n}$ if there exist two constants $0< C_{1}\leq C_{2}$ such that
\begin{align}\label{framedimention} C_{1}\|\textbf{z}\|_{2}^2 \leq \sum_{i=1}^{m} |\langle \textbf{z},\textbf{x}_i  \rangle|^2 \leq C_{2}\|\textbf{z}\|_{2}^2 \end{align}
holds for every $\textbf{z}\in \mathbb{R}^{n}$. Equivalently, a finite set is a frame for $\mathbb{R}^n$ if and only if it is a spanning set of $\mathbb{R}^n$.
The cone in \eqref{tukjian} is called a  {\it frame cone} if $\{\textbf{x}_{1},\ldots, \textbf{x}_{m}\}$ is a frame for $\mathbb{R}^{n}$.

Based on the above denotations, in what follows   we propose the   definition of a detectable UoC.

\begin{defi}\label{detectable} We say that    $\bigcup^{L}_{k=1} \textbf{\hbox{cone}}(\mathcal{X}_{k})$  is detectable,  if there exists a so-called
detector  $\mathcal{G}:=[\textbf{g}_{1}, \ldots, \textbf{g}_{\kappa}]\in\mathbb{R}^{n\times \kappa}$ such that for any nonzero target signal  $\textbf{z}\in\bigcup^{L}_{k=1} \textbf{\hbox{cone}}(\mathcal{X}_{l})$, the unique  index $l$  can be determined by  the detection  measurements $\{|\langle \textbf{g}_{1}, \textbf{z}\rangle|, \ldots, |\langle  \textbf{g}_{\kappa}, \textbf{z}\rangle|\}$ so that   $\textbf{z}\in\textbf{\hbox{cone}}(\mathcal{X}_{l})$.
\end{defi}

\subsection{Our goals, schemes and problems  in the present paper}\label{mudi}
The   union of cones (UoC) is an important type of  subset  of $\mathbb{R}^{n}$ which has been widely considered in many areas of research such as operations research (c.f. \cite{cone4,cone2,cone5}), signal processing (c.f. \cite{Donoho,cone6,cone3,P. Yu}), and representation theory (c.f. \cite{cone1}). Incidentally, since a linear space is a special type of cone (e.g. $\mathbb{R}^{2}$ can be regarded as the cone generated   from $\{\textbf{e}_{1}, -\textbf{e}_{1}, \textbf{e}_{2}, -\textbf{e}_{2}\}$), \emph{a union of linear spaces can be regarded as a UoC}.

It is well-known that the \emph{computational complexity}  and the \emph{amount of measurements}   are   two important considerations  for  the performance of  any  phase-retrieval method (c.f. \cite{Ba1,Ba9,WY}). The goal  of  this paper is to establish a phase retrieval  method in a UoC having low computational complexity and  requiring   very  few measurements.
Motivated by \cite{Xampling}, this goal will be  achieved by establishing the following two-step PR-scheme:
\begin{align}\label{step} \hbox{phase retrieval}=\hbox{detection}+\hbox{recovery}.\end{align}
Naturally, we need to address the issues in the following problem:
\begin{question}\label{wenti}
Under what conditions,  is a UoC detectable, namely,  the target cone  can be detected by  magnitude measurements?
Is it possible   to   utilize the   detectability   to reduce the amount of phase-retrievable measurement vectors  and  computational complexity (e.g. it can be $O(n)$ or  the FFT complexity $O(n\log n)$)?
\end{question}



\subsection {Existing results and our contributions}
In what follows we introduce our main contribution in this paper from the aspects  of cost of measurements and computational complexity (computational cost).

\subsubsection {Cost of measurements}
For the detection, we   establish the necessary and sufficient condition on the  detectability of   $\bigcup^{L}_{k=1} \textbf{\hbox{cone}}(\mathcal{X}_{k})$. Based on the condition we design a detector  $\mathcal{G}:=[\textbf{g}_{1}, \ldots, \textbf{g}_{L-1}]\in\mathbb{R}^{n\times (L-1)}$
and the detection  algorithm to detect the target cone. The detection can be achieved by using  only  $(L-1)$-number of  measurements.
Once the detection is completed, we then perform the phase-retrieval on  a single cone. As will be discussed  in Remark \ref{rem2.1}, there are at least $L-1$ cones, e.g. $\textbf{\hbox{cone}}(\mathcal{X}_{i}), i=1, \ldots, L-1$ in the detectable   $\bigcup^{L}_{k=1} \textbf{\hbox{cone}}(\mathcal{X}_{k})$ which   satisfy  the following overlap property
  \begin{align}\label{diyixx}
 \mathcal{R}(\mathcal{X}_{i}^{T})\cap \mathbb{R}^{+,m_{i}}\neq\emptyset.
 \end{align}
For the target  cone   $\textbf{\hbox{cone}}(\mathcal{X}_{i})$ in \eqref{diyixx},  we will  design $\hbox{rank}(\mathcal{X}_{i})$-number
of measurement vectors for the phase retrieval. Our contribution  on the amount of measurements is that  if all the $L$ cones satisfy \eqref{diyixx}, then ($L-1+\Gamma$)-number of
measurement vectors are sufficient for the two-step PR-scheme \eqref{step},   where $\Gamma=\max_{k}\{\hbox{rank}(\textbf{\hbox{cone}}(\mathcal{X}_{k}))\}$.


We emphasize two features of this approach.  (i) By the {\it complement property} for phase retrievable frames  (c.f. \cite{Ba1,Ba3,Ba4}), we know that any  phase-retrieval  method that applies to  the signals  in $\mathbb{R}^n$ requires at least  $2n-1$ measurement vectors.
Obviously, for many detectable UoCs,  the scheme \eqref{step}
requires  much fewer than $2n-1$ measurements $(L-1+\Gamma<2n-1).$ (ii) It is well known that the amount of measurement vectors can be significantly reduced for  sparse signals (e.g. \cite{Hand,Qiyucompressed}). In our case, $\Gamma$ being  small does not necessarily imply that the signals in $\bigcup^{L}_{k=1} \textbf{\hbox{cone}}(\mathcal{X}_{k})$ are  sparse.  So the reduction strategy  for the  amount of  measurements by scheme  \eqref{step} is different from the treatment of sparse signals.

\subsubsection {Computational complexity}

By using the i.i.d Gaussian measurement  vectors,
 E. Candes, Y. Eldar, T. Strohmer and V. Voroninski \cite{E.J. Candes}
proposed  the well-known PhaseLift  method to recover   $\textbf{z}$ in $\mathbb{R}^{n}$ (or $\C^{n}$).
Since then, based on the random measurements,  many  other efficient  phase-retrieval methods  such as
 Wirtinger Flow \cite{Wirting}, Alternating Minimization \cite{Netrapalli}, PhaseCut \cite{Mallat} and BlockPR \cite{WY}   have been proposed.  Among the above methods, the BlockPR, which holds for flat signals, has the lowest computational  complexity $O(n\log^{4}n)$.The signals  in  a  cone   may not be necessarily flat, and so they  do  not necessarily satisfy the condition required for the  BlockPR method. However,  by exploiting the structure of the detectable   UoC, the goal of significantly reducing  the computational complexity can  also be achieved. 
 Our  Algorithm \ref{separating} for detection costs $O(Ln)$-number of operations.
Theorem \ref{computational}  shows that if the target signal  lies in the cone $\textbf{\hbox{cone}}(\mathcal{X}_{i})$ satisfying  \eqref{diyixx}, then after detection, the phase-retrieval of the target signal  can be completed by $O(\gamma\log \gamma)$-operations, where $\gamma=\hbox{rank}(\mathcal{X}_{i})$.  Our contribution on the computational complexity is that  the proposed  phase-retrieval scheme \eqref{step} for a detectable union of $L$-cones all  satisfying \eqref{diyixx} has the computational  complexity $O(\Gamma\log \Gamma)+O(Ln)$, which
can be $O(n)$ or  $O(n\log n)$ for many cases of  $L$ and $\Gamma$.


\section{Two-step scheme  for recovering signals in detectable  union of cones}\label{section2}
Our PR-scheme \eqref{step} consists of detection and  recovery. In Subsection \ref{detection} we establish the sufficient and necessary condition
for the detectability of a UoC. The algorithm for this detection is presented in Algorithm 1.  Following this we discuss in Subsection \ref{sdfd} (Remark \ref{rem2.1})
 the cone structure derived from the above condition that is crucial to help achieve our goal. We also found that a union of linear subspaces (or spaces) is not detectable (Remark \ref{lindar}). The main results on the recovery will be presented in  Subsection \ref{fastrecovv}.

\subsection{Detection}\label{detection}
This subsection aims at establishing the sufficient and necessary condition for the detectability of a UoC, and presenting a detection   algorithm for the target cone.
\begin{theorem}\label{twocone}
A UoC $\bigcup^{L}_{k=1} \textbf{\hbox{cone}}(\mathcal{X}_{k})$, where  $\mathcal{X}_{k}=[\textbf{x}_{k,1},\ldots, \textbf{x}_{k,m_{k}}]\subseteq \mathbb{R}^{n\times m_{k}}$,  is detectable  if and only if for every $k(\geq2)$ we have either
\begin{align}\label{separatingPRgeneral}\begin{array}{lllllllllllllllll} \hbox{invim}(\mathcal{R}(\mathcal{X}_{l}^{T})\cap \mathbb{R}^{+,m_{l}})\cap \mathcal{N}(X^{T}_{k})\neq\emptyset\\
\hbox{or}\\
\hbox{invim}(\mathcal{R}(\mathcal{X}_{k}^{T})\cap \mathbb{R}^{+,m_{k}})\cap \mathcal{N}(X^{T}_{l})\neq\emptyset,
 \end{array}
\end{align}
where $ l=1, \ldots, k-1.$
\end{theorem}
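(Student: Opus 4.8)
The plan is to first rewrite condition \eqref{separatingPRgeneral} in the language of separating linear functionals, and then treat the two implications separately, with the converse being the delicate one. Unwinding the notation, since $\mathcal{X}_{l}^{T}\mathbf{g}$ has coordinates $\langle \mathbf{x}_{l,i},\mathbf{g}\rangle$, the set $\hbox{invim}(\mathcal{R}(\mathcal{X}_{l}^{T})\cap \mathbb{R}^{+,m_{l}})$ is exactly $\{\mathbf{g}\in\mathbb{R}^{n}:\mathcal{X}_{l}^{T}\mathbf{g}\succ 0\}$, i.e. the vectors having a strictly positive inner product with \emph{every} generator of $\textbf{\hbox{cone}}(\mathcal{X}_{l})$ (the interior of its dual cone), whereas $\mathcal{N}(\mathcal{X}_{k}^{T})=\mathcal{R}(\mathcal{X}_{k})^{\perp}$ is the orthogonal complement of the span of $\textbf{\hbox{cone}}(\mathcal{X}_{k})$. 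Hence \eqref{separatingPRgeneral} asserts that for each pair $l<k$ there is a vector $\mathbf{g}$ that is identically zero on one of the two cones and strictly positive on all nonzero elements of the other: if $\mathcal{X}_{l}^{T}\mathbf{g}\succ 0$ then $\langle\mathbf{g},\mathbf{z}\rangle=\sum_{i}\theta_{i}\langle\mathbf{g},\mathbf{x}_{l,i}\rangle>0$ for every nonzero $\mathbf{z}=\sum_{i}\theta_{i}\mathbf{x}_{l,i}\in\textbf{\hbox{cone}}(\mathcal{X}_{l})$, while any $\mathbf{g}\in\mathcal{N}(\mathcal{X}_{k}^{T})$ annihilates all of $\textbf{\hbox{cone}}(\mathcal{X}_{k})$.

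\emph{Sufficiency.} Assuming \eqref{separatingPRgeneral}, I would build a detector directly from these separators. For each pair $\{l,k\}$ choose $\mathbf{g}_{l,k}$ from whichever alternative holds, so that the magnitude $|\langle\mathbf{g}_{l,k},\cdot\rangle|$ vanishes on one cone of the pair and is strictly positive on the nonzero part of the other. Given a nonzero target $\mathbf{z}$, I declare the index $j$ for which the readings are consistent with $\mathbf{z}\in\textbf{\hbox{cone}}(\mathcal{X}_{j})$ across all pairs containing $j$ (zero wherever $\mathbf{g}$ kills cone $j$, positive wherever $\mathbf{g}$ is positive on it). The true cone is always consistent, and for any competing index $j'$ the single measurement $|\langle\mathbf{g}_{j,j'},\mathbf{z}\rangle|$ produces a zero-versus-positive mismatch that eliminates $j'$; the same mismatch shows two distinct cones cannot share a nonzero vector, so the detected index is well defined. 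This already proves detectability; the reduction to the $L-1$ vectors announced in Section \ref{mudi} is a separate refinement and is not needed for the equivalence itself.

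\emph{Necessity.} For the converse I would argue by contraposition: suppose both alternatives in \eqref{separatingPRgeneral} fail for some pair $(l,k)$. Each failure is the infeasibility of a system of the form $\{\mathcal{X}_{l}^{T}\mathbf{g}\succ 0,\ \mathcal{X}_{k}^{T}\mathbf{g}=0\}$, so a theorem of the alternative (Gordan/Stiemke type), applied inside $\mathcal{R}(\mathcal{X}_{k})^{\perp}$ and $\mathcal{R}(\mathcal{X}_{l})^{\perp}$ respectively, yields nonnegative coefficients, not all zero, with $\sum_{i}\lambda_{i}\mathbf{x}_{l,i}\in\mathcal{R}(\mathcal{X}_{k})$ and $\sum_{j}\mu_{j}\mathbf{x}_{k,j}\in\mathcal{R}(\mathcal{X}_{l})$; geometrically, each cone reaches into the span of the other. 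I would then show that no magnitude detector can separate the pair: by the alternative, any $\mathbf{g}$ that is not identically zero on one of the cones fails to be strictly positive on it, so its zero-reading there is ambiguous, and exploiting $|\langle\mathbf{g},\mathbf{z}\rangle|=|\langle\mathbf{g},-\mathbf{z}\rangle|$ one then exhibits nonzero signals in $\textbf{\hbox{cone}}(\mathcal{X}_{l})$ and $\textbf{\hbox{cone}}(\mathcal{X}_{k})$ carrying identical detection measurements, which defeats detectability.

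I expect this last step to be the main obstacle. The subtlety is that magnitude measurements are blind to the global sign, and condition \eqref{separatingPRgeneral} is precisely the algebraic encoding of a \emph{clean} zero/positive separation of each pair; turning the mere failure of \eqref{separatingPRgeneral} into two genuinely indistinguishable signals requires pinning down exactly which detection mechanism is admissible and handling the boundary rays where the separating functionals degenerate. This is where I would concentrate most of the care.
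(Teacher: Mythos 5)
Your reformulation of \eqref{separatingPRgeneral} as the existence, for each pair $(l,k)$, of a functional that annihilates one cone and is strictly positive on the nonzero part of the other is exactly right, and your sufficiency argument (pairwise separators, elimination of every competing index, pairwise intersections reduced to $\{0\}$) is correct and essentially the paper's; the paper merely organizes the same separators into $L-1$ sequential exclusions rather than $\binom{L}{2}$ simultaneous tests, which does not affect the equivalence.

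The gap is in the necessity direction, and you have flagged it yourself. Your contrapositive via Gordan's theorem produces, from the failure of the first alternative, coefficients $\lambda\succeq 0$, $\lambda\neq 0$, with $\sum_i\lambda_i\mathbf{x}_{l,i}\in\mathcal{R}(\mathcal{X}_k)$; but this vector need not lie in $\textbf{\hbox{cone}}(\mathcal{X}_k)$ and may even vanish, so it does not by itself yield a signal confusable between the two cones. The intermediate assertion that ``any $\mathbf{g}$ that is not identically zero on one of the cones fails to be strictly positive on it'' also does not follow: Gordan only excludes vectors that simultaneously annihilate one cone and are strictly positive on the other, and says nothing about a $\mathbf{g}$ that is nonzero on both. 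So the decisive step --- exhibiting two nonzero signals with identical detection measurements --- is never carried out. The paper closes necessity by a more elementary observation that avoids any theorem of the alternative: for a detector vector $\mathbf{g}$, the value set $\{|\langle\mathbf{g},\mathbf{y}\rangle|:\mathbf{y}\in\textbf{\hbox{cone}}(\mathcal{X}_i)\setminus\{0\}\}$ is invariant under positive scaling, hence is either $\{0\}$ or contains all of $\mathbb{R}^{+}$; detectability forces the two value sets of a pair to be disjoint, so one must equal $\{0\}$ (giving $\mathbf{g}\in\mathcal{N}(\mathcal{X}_k^{T})$) and the other must equal $\mathbb{R}^{+}$ exactly (giving, after possibly replacing $\mathbf{g}$ by $-\mathbf{g}$, that $\mathcal{X}_l^{T}\mathbf{g}\succ 0$, since otherwise some positive combination of the generators of $\mathcal{X}_l$ would be orthogonal to $\mathbf{g}$ and $0$ would lie in the first value set). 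Thus the detector itself witnesses \eqref{separatingPRgeneral}, with no need to manufacture indistinguishable signals; I recommend replacing your contrapositive sketch with this direct argument.
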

\begin{proof}
The proof is given in Subsection \ref{proofttwocone}.
\end{proof}
Suppose that $\textbf{z}\in \bigcup^{L}_{k=1} \textbf{\hbox{cone}}(\mathcal{X}_{k})$.
If, for example, the first equation in  \eqref{separatingPRgeneral} holds, pick $\textsf{g}\in\hbox{invim}(\mathcal{R}(\mathcal{X}_{l}^{T})\cap \mathbb{R}^{+,m_{l}})\cap \mathcal{N}(X^{T}_{k})$,
then we determine that $\textbf{z}\notin \textbf{\hbox{cone}}(\mathcal{X}_{l})$ when $|\langle \textbf{z}, \textsf{g}\rangle|=0$,
and $\textbf{z}\notin \textbf{\hbox{cone}}(\mathcal{X}_{k})$ when $|\langle \textbf{z}, \textsf{g}\rangle|>0$.
It is easy to see that based on  \eqref{separatingPRgeneral}, we can use the $L-1$ exclusions similar to the above
to detect the target cone. The detection can be completed   by  using Algorithm \ref{separating}.

\begin{algorithm}\label{separating}
    \caption{Detection of the source of any $\textbf{z}\in\bigcup^{L}_{k=1} \textbf{\hbox{cone}}(\mathcal{X}_{k})$.}
 \KwIn{$[\mathcal{X}_{1}, \ldots, \mathcal{X}_{L}]$. }
 $s\gets 1;$

\For{$k=1:(L-1)$}
{\If{$\hbox{invim}(\mathcal{R}(\mathcal{X}_{s}^{T})\cap \mathbb{R}^{+,m_{l}})\cap \mathcal{N}(X^{T}_{k+1})\neq\emptyset$}
{Pick $\textsf{g}\in\hbox{invim}(\mathcal{R}(\mathcal{X}_{s}^{T})\cap \mathbb{R}^{+,m_{s}})\cap \mathcal{N}(X^{T}_{k+1})$;

\If{$|\langle \textbf{z}, \textsf{g}\rangle|=0$}
{$s\gets k+1;$}
}
\If{$\hbox{invim}(\mathcal{R}(\mathcal{X}_{k+1}^{T})\cap \mathbb{R}^{+,m_{2}})\cap \mathcal{N}(X^{T}_{s})\neq\emptyset$}
{Pick $\textsf{g}\in\hbox{invim}(\mathcal{R}(\mathcal{X}_{k+1}^{T})\cap \mathbb{R}^{+,m_{k+1}})\cap \mathcal{N}(X^{T}_{s})$;

\If{$|\langle \textbf{z}, \textsf{g}\rangle|>0$}
{$s\gets k+1;$}
}
}
\KwOut{$\textbf{z}\in \textbf{\hbox{cone}}(\mathcal{X}_{s})$.}
\end{algorithm}
\subsection{Remarks on the detectable union of cones}\label{sdfd}

\begin{remark}\label{rem2.1} (i) By  Algorithm \ref{separating},  the source of  any $f\in \bigcup^{L}_{k=1} \textbf{\hbox{cone}}(\mathcal{X}_{k})$ can be detected
 through  $L-1$ exclusions if   condition in  \eqref{separatingPRgeneral}   is satisfied.
Only one measurement vector is required for every exclusion. Therefore we need  $(L-1)$-number of  measurement vectors   for the target   cone detection.
Moreover the detection requires $O(Ln)$-number of operations.
(ii) The condition  \eqref{separatingPRgeneral} implies that the overlap property   \begin{align} \label{ufff}\mathcal{R}(\mathcal{X}^{T})\cap \mathbb{R}^{+,m}\neq\emptyset\end{align} holds for at leat $L-1$ number of cones.
\end{remark}

As mentioned in Section \ref{section1}, a linear space (subspace) is a special type of cone. An interesting problem is:  \emph{can the union of linear  spaces (subspaces)
be detectable}?  The following remark tells us that a detectable UoC has at most one of the cones that is a linear subspace (This can be easily proved by Remark \ref{rem2.1} (ii) and  the fact that a cone satisfying \eqref{ufff} is not  a  linear subspace). That is for any signal in the union of linear  spaces (subspaces), the target cone where the signal is residing can not be detected by magnitude measurements.

\begin{remark}\label{lindar}
Suppose that the UoC  $\bigcup^{L}_{k=1} \textbf{\hbox{cone}}(\mathcal{X}_{k})$ is detectable. Consequently, there exist at least $L-1$
cones satisfying the overlap property \eqref{ufff}, and none of the $L-1$ cones is a  linear space (subspace). If there exists a  linear space (subspace)
among the $L$ cones, then it is the unique one and does not have the overlap property \eqref{ufff}. In other words, \emph{a union of linear  spaces (subspaces)
is not detectable}, and it does not satisfy the requirement for the proposed approach.
\end{remark}

 In the following remark we discuss how to check \eqref{separatingPRgeneral} and \eqref{ufff}.

\begin{remark}
The condition \eqref{ufff} is equivalent to  that the system of linearly  inequalities
\begin{align}\label{budengshi}
\mathcal{X}^{T}\textbf{x}\succ 0
\end{align}
has a solution.
There exist  many   methods (e.g. in \cite{anual1,anual2,anual3}) in the liturature that can be used  to determine whether the \eqref{budengshi} has a solution.
The condition $\hbox{invim}(\mathcal{R}(\mathcal{X}_{l}^{T})\cap \mathbb{R}^{+,m_{l}})\cap \mathcal{N}(X^{T}_{k})\neq\emptyset$ in  \eqref{separatingPRgeneral}
is equivalent to  that the optimum of the  following quadratic programming problem
\begin{align}\label{mmyp012}
\left\{\begin{array}{lllllllllllllllll}
\min ||\mathcal{X}^{T}_{k}\textbf{x}||^{2}_{2} \\
\hbox{s.t.}\ \mathcal{X}_{l}^{T}\textbf{x}\succ0,
\end{array}\right.
\end{align}
is zero.

\end{remark}

\subsection{Recovery}\label{fastrecovv}
After the detection  by the procedures outlined in  Algorithm \ref{separating}, we  can  detect   the cone  that contains  the target signal.
What  left is to perform  phase retrieval   on the target cone  but not on the entire set UoC. As discussed in Section \ref{section1}, applying some of the existing  methods to a finitely generated cone  is either too expensive in terms of computational  complexity and measurements or  not even applicable due to the restriction of the methods. For example, the recently proposed fast method  BlockPR by  M. A. Iwen, A. Viswanathan, and Y. Wang \cite{WY} applies to flat vectors, but does not necessarily applies to vectors in a cone.  In this subsection  we establish a fast  PR method for  the cone in a detectable UoC with relatively fewer measurements and low computational complexity. The main results are outlined in  Theorem \ref{computational}, Theorem  \ref{sfdd} and Proposition \ref{complex}.

\begin{theorem}\label{computational}   Let $\textbf{\hbox{cone}}(\mathcal{X})$ be a  cone with $\mathcal{X}=[\textbf{x}_{1},\ldots, \textbf{x}_{m}]\in \mathbb{R}^{n\times m}$ such  that the overlap property  \eqref{ufff} holds.
Then there   exist $\gamma$-vectors $\{\textsf{f}_{k}\}^{\gamma}_{k=1}$  such that $\{|\langle \textbf{z}, \textsf{f}_{k}\rangle |\}^{\gamma}_{k=1}$ determines $\textbf{z}$ (up to a unimodular scalar) for any $\textbf{z}\in \textbf{\hbox{cone}}(\mathcal{X})$, where $\gamma=\hbox{rank}(\mathcal{X})$.
Moreover,  $\{\textsf{f}_{k}\}^{\gamma}_{k=1}$  can be designed in such a way that the recovery of  $\textbf{z}$ requires only  $O(\gamma\log \gamma)$-number of operations, i.e., the computational cost is FFT-time.
\end{theorem}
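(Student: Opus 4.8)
The key hypothesis is the overlap property \eqref{ufff}, which says there exists a vector $\textbf{w}\in\mathbb{R}^{n}$ with $\mathcal{X}^{T}\textbf{w}\succ 0$. My plan is to exploit this to reduce the sign ambiguity in the phase-retrieval problem. The crucial observation is that for any $\textbf{z}\in\textbf{\hbox{cone}}(\mathcal{X})$ we can write $\textbf{z}=\sum_{i}\theta_{i}\textbf{x}_{i}$ with all $\theta_{i}\geq 0$, so that $\langle \textbf{z},\textbf{w}\rangle=\sum_{i}\theta_{i}\langle\textbf{x}_{i},\textbf{w}\rangle>0$ whenever $\textbf{z}\neq 0$, since every term $\langle\textbf{x}_{i},\textbf{w}\rangle=(\mathcal{X}^{T}\textbf{w})_{i}>0$. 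Thus $\langle\textbf{z},\textbf{w}\rangle$ is \emph{automatically positive} on the whole cone; the sign of this inner product is known a priori and need not be measured. This is the structural advantage of a cone over an arbitrary subspace, and it is exactly what lets us beat the generic $2\gamma-1$ bound down to $\gamma$.

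First I would reduce to the subspace $V:=\mathcal{R}(\mathcal{X})=\hbox{span}\{\textbf{x}_{1},\ldots,\textbf{x}_{m}\}$, which has dimension $\gamma=\hbox{rank}(\mathcal{X})$, since the cone lives inside $V$ and any measurement vector can be replaced by its orthogonal projection onto $V$ without changing the inner products $\langle\textbf{z},\textsf{f}_{k}\rangle$ for $\textbf{z}\in V$. Working now in $\mathbb{R}^{\gamma}$ (after fixing an orthonormal basis for $V$), I would choose the first measurement vector to be (the projection of) the overlap direction $\textbf{w}$, so that $\langle\textbf{z},\textsf{f}_{1}\rangle=|\langle\textbf{z},\textsf{f}_{1}\rangle|$ is genuinely recovered, not just its magnitude. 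The remaining $\gamma-1$ vectors I would design to determine, together with the known sign from $\textsf{f}_{1}$, all of $\textbf{z}$. The point is that once one linear functional of $\textbf{z}$ is known with its sign, the residual phase-retrieval problem is a \emph{linear} problem up to a single global sign that has already been pinned down, so roughly $\gamma$ magnitudes suffice rather than $2\gamma-1$. To get the FFT complexity, I would arrange the $\gamma$ measurement vectors to be rows of a circulant matrix on $\mathbb{R}^{\gamma}$ (as the title and keywords of the paper suggest), so that both the forward measurement map and the inversion reduce to a length-$\gamma$ discrete Fourier transform, giving the $O(\gamma\log\gamma)$ cost.

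The main obstacle I anticipate is the recovery step: showing that the magnitudes $\{|\langle\textbf{z},\textsf{f}_{k}\rangle|\}_{k=1}^{\gamma}$, supplemented by the known positivity of $\langle\textbf{z},\textbf{w}\rangle$, actually determine $\textbf{z}$ uniquely up to a global unimodular scalar, \emph{and} that this can be done via a circulant/FFT structure rather than a generic combinatorial argument. The delicate issue is compatibility of the two requirements — the overlap constraint $\mathcal{X}^{T}\textbf{w}\succ 0$ fixes the direction $\textbf{w}$, while the circulant construction wants the measurement vectors to be shifts of a single generator, and I must verify that one can simultaneously satisfy both (for instance by choosing the circulant generator so that one of its associated functionals is strictly positive on the cone). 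I expect this to hinge on an explicit construction of the generator together with an injectivity argument for the magnitude-to-signal map restricted to the cone, using that the sign ambiguity collapses to the single scalar already determined by $\textsf{f}_{1}$. I would verify the measurement count ($\gamma$) and the complexity bound ($O(\gamma\log\gamma)$) as immediate consequences of this circulant design.
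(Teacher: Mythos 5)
Your overall architecture matches the paper's: reduce to the $\gamma$-dimensional span of the generators via an isometry, use the overlap direction to kill sign ambiguity, and use a circulant structure for FFT-time inversion. However, there are two genuine gaps, located exactly at the points you flag as ``obstacles,'' and neither is a routine verification. The first concerns your claim that once $\langle \textbf{z}, \textsf{f}_{1}\rangle$ is known with its sign, ``the residual phase-retrieval problem is a linear problem up to a single global sign.'' This is not correct as stated: the remaining $\gamma-1$ magnitudes $|\langle \textbf{z}, \textsf{f}_{k}\rangle|$ each carry their own sign ambiguity, so the data are consistent with up to $2^{\gamma-1}$ candidate solutions of the linear system, and knowing one functional exactly does not collapse them. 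The paper's resolution is to make \emph{every} measurement sign-definite on the cone: it takes the rows $\textsf{p}_{1},\ldots,\textsf{p}_{\gamma}$ of the circulant matrix generated by an anchor vector $\textsf{p}_{1}$ satisfying $\mathcal{X}^{T}\textsf{p}_{1}\succ 0$, and replaces them by $\textsf{f}_{1}=\textsf{p}_{1}$ and $\textsf{f}_{k}=\delta_{k}\textsf{p}_{1}+\textsf{p}_{k}$ with $\delta_{k}>0$ chosen large enough that $\langle \textbf{x}_{l},\textsf{f}_{k}\rangle>0$ for every generator $\textbf{x}_{l}$. Then $|\langle \textbf{z},\textsf{f}_{k}\rangle|=\langle \textbf{z},\textsf{f}_{k}\rangle$ for \emph{all} $k$ and all $\textbf{z}$ in the cone, the system is genuinely linear, and it is inverted by a unipotent triangular correction followed by the circulant inverse, which is where the FFT enters.

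The second gap is the compatibility issue you correctly identify but do not resolve: an arbitrary overlap direction $\textbf{w}$ with $\mathcal{X}^{T}\textbf{w}\succ 0$ may have a DFT with zero entries, in which case the circulant it generates is singular and the scheme fails. The paper handles this with two dedicated lemmas: Lemma \ref{gushi} shows that $\mathcal{R}(\mathcal{Y}^{T})\cap \mathbb{R}^{+,m}$ contains $\gamma$ linearly independent vectors (by perturbing one anchor within the open overlap set), and Lemma \ref{tjjj} uses these, via an explicit support-augmentation of the DFTs (replacing $\widehat{\textbf{a}}_{\ell}$ by $\nu\widehat{\textbf{a}}_{\ell}+\widehat{\textbf{a}}_{j}$ with $\nu$ large, which stays in the overlap set and strictly increases the DFT support), to produce an anchor whose DFT has full support, hence an invertible circulant. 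Without these two ingredients the claimed measurement count $\gamma$ and the $O(\gamma\log\gamma)$ complexity are not established, so the proposal as written does not yet constitute a proof.
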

\begin{proof}
The proof is given in Section \ref{ssss}.
\end{proof}

Theorem \ref{computational} implies that the  property \eqref{ufff} is crucial for reducing the amount of measurements and computational
complexity for  the PR in a cone. By Remark \ref{rem2.1} (ii) there are at least $L-1$ cones in the detectable UoC $\bigcup^{L}_{k=1} \textbf{\hbox{cone}}(\mathcal{X}_{k})$ which satisfy  \eqref{ufff}.
We have the following result for the case  when all the $L$ cones in $\bigcup^{L}_{k=1} \textbf{\hbox{cone}}(\mathcal{X}_{k})$ satisfy \eqref{ufff}.
\begin{theorem}\label{sfdd}
Suppose that $\bigcup^{L}_{k=1} \textbf{\hbox{cone}}(\mathcal{X}_{k})$ is detectable, and all the $L$ cones satisfy the overlap property \eqref{ufff}.
Then, by using the two-step PR-scheme \eqref{step}, any target signal in the UoC can be determined by at most $L-1+\Gamma$ magnitude measurements, where $\Gamma=\max_{k}\{\hbox{rank}(\mathcal{X}_{k})\}$.
Moreover, our scheme costs at most  $O(Ln)+O(\Gamma\log \Gamma)$-number of  operations.
\end{theorem}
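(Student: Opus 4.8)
The plan is to assemble the two-step scheme \eqref{step} by chaining the detection guarantee with the single-cone recovery guarantee, and then to bound each cost by its worst case over the $L$ cones. First I would invoke detectability: since $\bigcup^{L}_{k=1}\textbf{\hbox{cone}}(\mathcal{X}_{k})$ is detectable, Theorem \ref{twocone} guarantees that condition \eqref{separatingPRgeneral} holds, so Algorithm \ref{separating} is well-defined and, fed with the fixed detector $\mathcal{G}=[\textbf{g}_{1},\ldots,\textbf{g}_{L-1}]$, correctly returns the unique index $s$ with $\textbf{z}\in\textbf{\hbox{cone}}(\mathcal{X}_{s})$. By Remark \ref{rem2.1}(i) this detection phase consumes exactly $L-1$ magnitude measurements $\{|\langle\textbf{z},\textbf{g}_{j}\rangle|\}_{j=1}^{L-1}$ and runs in $O(Ln)$ operations.

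Next, with the target cone $\textbf{\hbox{cone}}(\mathcal{X}_{s})$ now known, I would run the recovery phase inside that single cone. Here the hypothesis that \emph{all} $L$ cones satisfy the overlap property \eqref{ufff} is exactly what is needed: in particular $\textbf{\hbox{cone}}(\mathcal{X}_{s})$ satisfies \eqref{ufff}, so Theorem \ref{computational} applies to it. It supplies $\gamma_{s}=\hbox{rank}(\mathcal{X}_{s})$ recovery vectors $\{\textsf{f}_{k}\}_{k=1}^{\gamma_{s}}$ whose magnitude measurements $\{|\langle\textbf{z},\textsf{f}_{k}\rangle|\}_{k=1}^{\gamma_{s}}$ determine $\textbf{z}$ up to a unimodular scalar, at a cost of $O(\gamma_{s}\log\gamma_{s})$ operations. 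Combining the two phases, the total number of magnitude measurements spent on $\textbf{z}$ is $(L-1)+\gamma_{s}$ and the total operation count is $O(Ln)+O(\gamma_{s}\log\gamma_{s})$.

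Finally I would pass to the worst case over which cone is detected. Since $\gamma_{s}\le\max_{k}\{\hbox{rank}(\mathcal{X}_{k})\}=\Gamma$ no matter what index $s$ the detector announces, the measurement count is bounded by $(L-1)+\Gamma$ and the operation count by $O(Ln)+O(\Gamma\log\Gamma)$, which is precisely the asserted bound.

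The one point deserving care is the adaptive nature of the scheme: the recovery vectors $\{\textsf{f}_{k}\}$ are not fixed in advance but are selected \emph{after} the detector announces $s$, so they are tailored to $\textbf{\hbox{cone}}(\mathcal{X}_{s})$. I expect this to be the only real obstacle, so one must be explicit that the bound $\Gamma$ counts the measurements actually queried for a given signal, namely the $\gamma_{s}$ vectors of the detected cone, rather than the $\sum_{k}\gamma_{k}$ vectors one would have to prepare if recovery were required to be non-adaptive; the detection step is precisely what licenses this reduction. Everything else is a direct concatenation of Theorem \ref{twocone}, Remark \ref{rem2.1}, and Theorem \ref{computational}, with no new estimates required.
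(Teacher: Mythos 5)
Your proposal is correct and follows essentially the same route as the paper's proof: detection via Algorithm \ref{separating} and Remark \ref{rem2.1}(i) accounts for the $L-1$ measurements and $O(Ln)$ operations, and Theorem \ref{computational} applied to the detected cone (valid since every cone satisfies \eqref{ufff}) accounts for the remaining $\gamma_{s}\le\Gamma$ measurements and $O(\Gamma\log\Gamma)$ operations. Your closing remark on the adaptive selection of the recovery vectors is a sensible clarification but does not change the argument.
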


\begin{proof}
By Remark \ref{rem2.1}(i), the detection  strategy in  Algorithm \ref{separating} needs  $L-1$ magnitude measurements.
After the detection step, the phase-retrieval is performed  on  the target cone. Since all the cones satisfy the overlap property \eqref{ufff}, by Theorem \ref{computational}
the phase-retrieval on  the target cone needs at most $\Gamma$ magnitude measurements. Then $L-1+\Gamma$ measurements are  sufficient  for  the  two-step PR-scheme. The rest of the proof can be concluded by Remark \ref{rem2.1}(i) and Theorem \ref{computational}.
\end{proof}


The following proposition  states that for many cases of $L$ and $\Gamma$, the scheme \eqref{step} requires  very few measurements and has very  low computational complexity.

\begin{prop}\label{complex} (i)
The smaller $L+\Gamma$, the fewer measurements we need for our PR scheme \eqref{step}. In particular, when
$L+\Gamma<2n$ we can use less then $2n-1$ measurements (the critical amount  related to complement property) to complete  our PR scheme.

 (ii) As for the computational complexity, if $\Gamma\log \Gamma\lesssim n$ and $L$ is a constant independent of $n$, then our scheme can be
performed by $O(n)$-number of  operations. If $\Gamma\approx n$, then our scheme can be
done by $O(n\log n)$-number of  operations, the FFT time.\hfill $\blacksquare$
\end{prop}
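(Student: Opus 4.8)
The plan is to derive both claims directly from Theorem~\ref{sfdd}, which already isolates the two quantities that govern the scheme~\eqref{step}: the measurement budget $L-1+\Gamma$ and the operation count $O(Ln)+O(\Gamma\log\Gamma)$, both valid whenever all $L$ cones satisfy the overlap property~\eqref{ufff}. Once these two bounds are in hand, each assertion reduces to elementary monotonicity and asymptotic bookkeeping, so I would introduce no new machinery and instead simply specialize Theorem~\ref{sfdd} to the stated regimes.

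For part~(i), I would first note that the measurement count $L-1+\Gamma$ is a strictly increasing function of $L+\Gamma$; hence decreasing $L+\Gamma$ can only decrease the number of measurements, which is exactly the asserted monotonicity. For the quantitative claim, the implication $L+\Gamma<2n \Rightarrow L-1+\Gamma<2n-1$ is immediate, and I would then invoke the complement property recalled in Section~\ref{section1}, namely that any phase-retrievable frame for the whole space $\mathbb{R}^{n}$ must contain at least $2n-1$ vectors. Comparing the two counts shows that the scheme~\eqref{step} completes phase retrieval with strictly fewer than the critical $2n-1$ measurements whenever $L+\Gamma<2n$.

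For part~(ii), I would substitute the two asymptotic regimes into the operation count $O(Ln)+O(\Gamma\log\Gamma)$ supplied by Theorem~\ref{sfdd}. In the first regime, the hypothesis that $L$ is a constant independent of $n$ gives $O(Ln)=O(n)$, while $\Gamma\log\Gamma\lesssim n$ gives $O(\Gamma\log\Gamma)=O(n)$; adding the two dominated terms yields $O(n)$. In the second regime, $\Gamma\approx n$ makes the recovery term $O(\Gamma\log\Gamma)=O(n\log n)$ the dominant contribution.

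The only point requiring care, and the closest thing to an obstacle here, is the detection term $O(Ln)$ in the second regime: for the total to collapse to the advertised FFT time $O(n\log n)$ one still needs $L$ to be small enough, and it suffices that $L$ remain constant, or more generally that $L\lesssim\log n$, so that $O(Ln)$ is absorbed into $O(n\log n)$. I would flag this hypothesis explicitly (it is already in force under the constant-$L$ convention carried over from part~(ii)), after which the estimate $O(Ln)+O(\Gamma\log\Gamma)=O(n\log n)$ follows at once by combining the two bounded terms.
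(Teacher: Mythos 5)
Your proposal is correct and follows exactly the route the paper intends: the proposition is stated as an immediate consequence of Theorem~\ref{sfdd} (measurement count $L-1+\Gamma$, complexity $O(Ln)+O(\Gamma\log\Gamma)$) combined with the complement property, and your monotonicity and substitution arguments fill in precisely those steps. Your explicit flag that the $\Gamma\approx n$ regime still needs $L$ constant (or at worst $L\lesssim\log n$) so that the $O(Ln)$ detection term is absorbed into $O(n\log n)$ is a legitimate point the paper leaves implicit, but it does not change the substance of the argument.
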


\begin{remark}\label{necessary} (i)
Suppose that   $\mathcal{X}_{k}=[\textbf{x}_{k,1},\ldots, \textbf{x}_{k,m_{k}}]$ satisfies  \eqref{ufff}, i.e.,  $\mathcal{R}(\mathcal{X}_{k}^{T})\cap \mathbb{R}^{+,m_{k}}\neq\emptyset$. Then $\textbf{\hbox{cone}}(\mathcal{X}_{k})$ never contains  the unit  ball of  $\mathbb{R}^n$. (ii) Suppose that  $\bigcup^{L}_{k=1} \textbf{\hbox{cone}}(\mathcal{X}_{k})$ is detectable and all the $L$ cone generators satisfy \eqref{ufff}. Then $\bigcup^{L}_{k=1} \textbf{\hbox{cone}}(\mathcal{X}_{k})$ does not contain  the unit  ball of  $\mathbb{R}^n$ if  $L+\Gamma< 2n$.
\end{remark}

\begin{proof}
We first prove Part (i).  By Theorem \ref{computational}, there exist $n$ phase retrievable    vectors  for $\textbf{\hbox{cone}}(\mathcal{X}_{k})$. If the   unit  ball
$ \textbf{B}\subseteq \textbf{\hbox{cone}}(\mathcal{X}_{k})$, then the  $n$  vectors above  can also do PR for $ \textbf{B}$ and for $\mathbb{R}^n$.
 By the   complement property in \cite{Ba1}, however,  it requires at least $2n-1$ vectors to do PR for
$\mathbb{R}^n$ and also   for the unit ball. This is a contradiction, and the proof is concluded. Part (ii) can be proved similarly  by Theorem \ref{sfdd}
and the complement property.
\end{proof}

\section{Proof of Theorem \ref{computational}, algorithm for the phase-retrievable measurement  vectors, and the recovery formula}\label{ssss}

Before proving Theorem \ref{computational} and presenting an  algorithm for  $\{\textsf{f}_{k}\}^{\gamma}_{k=1}$ therein, we need some preparations.
Recall that  $\textbf{\hbox{cone}}(\mathcal{X})$ in Theorem \ref{computational} may not be   a frame  cone. However,
 the cone in Lemma \ref{gushi}
or  Lemma \ref{tjjj} will be required  to be a frame-type. In order to avoid notation confusions, we  will use   $\textbf{\hbox{cone}}(\mathcal{Y})$ instead of $\textbf{\hbox{cone}}(\mathcal{X})$ before  we present the proof of Theorem \ref{computational}, where $\mathcal{Y}\in\mathbb{R}^{n\times m}$.

Suppose that the column vectors  of $\mathcal{Y}=[\textbf{y}_{1},\ldots, \textbf{y}_{m}]$ constitute  a frame of $\mathbb{R}^{n}$, and the overlap property  \eqref{ufff} holds for $\mathcal{Y}.$
For any $\textbf{z}:=(z_{1},\ldots, z_{m})^{T}\in \mathcal{R}(\mathcal{Y}^{T})\cap\mathbb{R}^{+,m},$
it is easy to check by the frame property \eqref{framedimention}  that
\begin{align}\label{P1} \textsf{p}:=(\mathcal{Y}\mathcal{Y}^{T})^{-1}\mathcal{Y}\textbf{z}\end{align}
is the unique solution  to the following equation with respect to the variable  $\textbf{x}\in \mathbb{R}^{n}$,
\begin{align}\label{dyibu}
(\langle \textbf{x}, \textbf{y}_{1}\rangle, \langle \textbf{x}, \textbf{y}_{2}\rangle, \ldots, \langle \textbf{x}, \textbf{y}_{m}\rangle)^{T} =\textbf{z}.
\end{align}
Since  the measurements $\langle \textsf{p}, \textbf{y}_{1}\rangle, \ldots, \langle \textsf{p}, \textbf{y}_{m}\rangle$  are all positive,
 we will call   $\textsf{p}$  an \emph{anchor vector}.

\subsection{Two  auxiliary lemmas and design of special  anchor vector}

\begin{lemma}\label{gushi}   Let $\mathcal{Y}=[\textbf{y}_{1},\ldots, \textbf{y}_{m}]\in \mathbb{R}^{n\times m}$ and  $\textbf{\hbox{cone}}(\mathcal{Y})$ be a frame cone of $ \mathbb{R}^{n}$ such that \eqref{ufff} holds, i.e.,
$\mathcal{R}(\mathcal{Y}^{T})\cap \mathbb{R}^{+,m}\neq\emptyset$.
Then $\mathcal{R}(\mathcal{Y}^{T})\cap \mathbb{R}^{+,m}$ contains  $n$-linearly independent vectors.
\end{lemma}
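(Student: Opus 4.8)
The plan is to exploit the fact that $\mathcal{R}(\mathcal{Y}^{T})$ is a genuine $n$-dimensional subspace of $\mathbb{R}^{m}$ while the positive orthant $\mathbb{R}^{+,m}$ is \emph{open}, so that the nonempty intersection assumed in \eqref{ufff} is a relatively open piece of an $n$-dimensional space and therefore cannot be contained in any proper subspace.

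First I would record the dimension count. Since $\textbf{\hbox{cone}}(\mathcal{Y})$ is a frame cone, the columns $\{\textbf{y}_{1},\ldots,\textbf{y}_{m}\}$ span $\mathbb{R}^{n}$ by the frame condition \eqref{framedimention}, so $\hbox{rank}(\mathcal{Y})=n$, and hence $V:=\mathcal{R}(\mathcal{Y}^{T})$ is an $n$-dimensional subspace of $\mathbb{R}^{m}$. Next I would observe that $\mathbb{R}^{+,m}$ is open in $\mathbb{R}^{m}$, so that $C:=V\cap \mathbb{R}^{+,m}$ is open in the subspace topology of $V$; by the hypothesis \eqref{ufff} it is also nonempty. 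Fixing an anchor $\textbf{p}\in C$, there is then some $\varepsilon>0$ with $\{\textbf{x}\in V:\ \|\textbf{x}-\textbf{p}\|_{2}<\varepsilon\}\subseteq C$.

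The remaining step is to convert this relative openness into the existence of $n$ linearly independent vectors inside $C$, and this is the only point beyond routine bookkeeping. I would argue that $\hbox{span}(C)=V$: since $C$ contains a relative ball of $V$, its span is a subspace of $V$ with nonempty interior in $V$, and a \emph{proper} subspace of a finite-dimensional space has empty interior; this forces $\hbox{span}(C)=V$, so $C$ contains a basis of $V$, i.e. $n$ linearly independent vectors. Alternatively, one may make the $n$ vectors explicit: fixing any basis $\{\textbf{v}_{1},\ldots,\textbf{v}_{n}\}$ of $V$ and setting $\textbf{w}_{i}:=\textbf{p}+\delta\textbf{v}_{i}$, openness guarantees $\textbf{w}_{i}\in C$ for all small $\delta>0$, while writing $\textbf{p}=\sum_{j}a_{j}\textbf{v}_{j}$ one finds that the coordinate matrix of $\{\textbf{w}_{i}\}$ in this basis is $\delta I+\textbf{a}\textbf{1}^{T}$, whose determinant $\delta^{\,n-1}(\delta+\sum_{j}a_{j})$ is a nonzero polynomial in $\delta$ and hence nonvanishing for all sufficiently small $\delta>0$ outside a finite set; for such $\delta$ the $\textbf{w}_{i}$ are linearly independent and lie in $C$.

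In short, the substance of the lemma is purely the interaction of an $n$-dimensional subspace with an open orthant, and the expected crux is precisely the passage from relative openness to linear independence, handled either by the spanning argument or by the explicit perturbation above; everything else is the dimension count $\hbox{rank}(\mathcal{Y})=n$ coming from the frame property.
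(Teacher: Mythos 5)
Your proof is correct, and its main line is genuinely softer than the paper's. The paper works in coordinates: it picks a basis $\{\textbf{y}_{1},\ldots,\textbf{y}_{n}\}$ from the columns, pulls the anchor back to $\textbf{a}_{1}=(\mathcal{Y}\mathcal{Y}^{T})^{-1}\mathcal{Y}\textbf{z}_{1}$, and perturbs it by explicit positive vectors $\textbf{g}_{k}$ subject to the quantitative bound \eqref{fft}, which keeps the tail coordinates $z_{n+1,1},\ldots,z_{m,1}$ positive while the matrix $\mathcal{A}_{\textbf{g}}$ of leading coordinates is made invertible. Your primary argument replaces all of this with the observation that $V\cap\mathbb{R}^{+,m}$ is a nonempty relatively open subset of the $n$-dimensional space $V=\mathcal{R}(\mathcal{Y}^{T})$ and hence spans $V$, because a proper subspace has empty interior; this is shorter, avoids the case split on $m=n$ versus $m>n$ and the choice of a basis among the columns, and makes clear that the only inputs are $\hbox{rank}(\mathcal{Y})=n$ and the openness of the orthant. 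What the paper's explicit construction buys in exchange is algorithmic content: the perturbed vectors \eqref{maoc}--\eqref{haixiang} are reused verbatim in Algorithm 2 to build the anchor vector with full spectral support, so the constructive detail is not incidental. Your second, explicit variant ($\textbf{w}_{i}=\textbf{p}+\delta\textbf{v}_{i}$ with the determinant $\delta^{\,n-1}(\delta+\sum_{j}a_{j})$) is essentially a coordinate-free rendering of the paper's perturbation and would serve the same algorithmic purpose; either version of your argument is complete.
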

\begin{proof}
If $m=n$, then the $n$-column vectors  of $\mathcal{Y}$ are a  basis  of $\mathbb{R}^{n}$.
Naturally, in this case,  $\mathcal{R}(\mathcal{Y}^{T})= \mathbb{R}^{n}$  and  the result holds.
We  next  prove the lemma  for the case of  $m>n.$ Without losing generality, we  assume that the first $n$-column vectors  $\{\textbf{y}_{1},\ldots, \textbf{y}_{n}\}$ of $\mathcal{Y}$ form a basis of $\mathbb{R}^{n}$.
Let  $\textbf{z}_{1}:=(z_{1,1}, \ldots, z_{m,1})^{T}\in \mathcal{R}(\mathcal{Y}^{T})\cap \mathbb{R}^{+,m}.$
Denote \begin{align}\label{zxcd} \textbf{a}_{1}:=(a_{1,1}, \ldots, a_{n,1})^{T}=(\mathcal{Y}\mathcal{Y}^{T})^{-1}\mathcal{Y}\textbf{z}_{1}.\end{align}
By \eqref{P1}, $\textbf{a}_{1}$ is the solution to  \eqref{dyibu} with $\textbf{z}$ being replaced by $\textbf{z}_{1}$.
Recall that $\{\textbf{y}_{1},\ldots, \textbf{y}_{n}\}$ of $\mathcal{Y}$ is a basis of $\mathbb{R}^{n}$.
Then
$\textbf{a}_{1}$ can be also expressed as $[\textbf{y}_{1},\ldots, \textbf{y}_{n}]^{-T}(z_{1,1}, \ldots, z_{n,1})^{T}$. Since the set of all the $n\times n$ invertible matrices is dense in $\mathbb{R}^{n\times n}$, there exist $\textbf{g}_{k}:=(\textbf{g}_{1,k}, \ldots, \textbf{g}_{n,k})^{T}\in \mathbb{R}^{+, n}$ for $k=2, \ldots, n$ such that
\begin{align}\notag
\mathcal{A}_{\textbf{g}}:=\left[\begin{array}{cccccccccccccccccc}
z_{1,1}&z_{1,1}+\textbf{g}_{1,2}&\cdots&z_{1,1}+\textbf{g}_{1,n}\\
z_{2,1}&z_{2,1}+\textbf{g}_{2,2}&\cdots&z_{2,1}+\textbf{g}_{2,n}\\
z_{3,1}&z_{3,1}+\textbf{g}_{3,2}&\cdots&z_{3,1}+\textbf{g}_{3,n}\\
\vdots&\vdots&\ddots&\vdots\\
z_{n,1}&z_{n,1}+\textbf{g}_{n,2}&\cdots&z_{n,1}+\textbf{g}_{n,n}
\end{array}\right]
\end{align}
is invertible
and
\begin{align} \label{fft} \begin{array}{lllllllllllllllll}\max\{||[\textbf{y}_{1},\ldots, \textbf{y}_{n}]^{-T}\textbf{g}_{k}||_{\infty}: k=2,\ldots, n\}\\
<\frac{\min\{z_{n+1,1}, \ldots, z_{m,1}\}}{||[\textbf{y}_{n+1},\ldots, \textbf{y}_{m}]^{T}||_{\infty}}.\end{array}\end{align}
For $k=2, \ldots, n$,
define \begin{align}\label{maoc} \textbf{a}_{k}:=[\textbf{y}_{1},\ldots, \textbf{y}_{n}]^{-T}((z_{1,1}, \ldots, z_{n,1})^{T}+\textbf{g}_{k}).\end{align}
Now it follows from \eqref{zxcd}, \eqref{fft} and \eqref{maoc} that \begin{align} \label{haixiang} [\textbf{z}_{1}, \ldots, \textbf{z}_{n}]:=\mathcal{Y}^{T}[\textbf{a}_{1}, \ldots, \textbf{a}_{n}]\in \mathbb{R}^{+,m}\times \mathbb{R}^{+,n}.\end{align}
That is, $\textbf{z}_{k}\in \mathcal{R}(\mathcal{Y}^{T})\cap \mathbb{R}^{+,m}$.
Using \eqref{maoc} again, the invertible matrix $\mathcal{A}_{\textbf{g}}$ consist of  the first $n$ rows of $[\textbf{z}_{1}, \ldots, \textbf{z}_{n}]$.
 Thus $\hbox{rank}([\textbf{z}_{1}, \ldots, \textbf{z}_{n}])=n$, and the proof is concluded.
\end{proof}

We also need circulant matrices that ensure fast computation (More details about this topic can be referred to \cite{Gray}).
For a   vector $\textbf{p}=(p_{0}, \ldots, p_{n-1})^{T}\in \mathbb{C}^{n}$, its  discrete Fourier transform (DFT) $\widehat{\textbf{p}}=(\widehat{p}_{0}, \ldots, \widehat{p}_{n-1})^{T}$ is defined by
$\widehat{p}_{k}=\sum^{n-1}_{l=0}p_{l}e^{\frac{-\textsf{i}2lk\pi}{n}}$.  For the row vector $\textbf{p}^{T}$, we denote its generating circulant matrix by $\hbox{circ}(\textbf{p}^{T})$, namely,
$$\hbox{circ}(\textbf{p}^{T})=\left[\begin{array}{lllllllllllllllll}
p_{0}&p_{1}&p_{2}&\cdots&p_{n-1}\\
p_{n-1}&p_{0}&p_{1}&\cdots&p_{n-2}\\
p_{n-2}&p_{n-1}&p_{0}&\cdots&p_{n-3}\\
\vdots&\vdots&\vdots&\ddots&\vdots\\
p_{1}&p_{2}&\cdots&\cdots&p_{0}
\end{array}\right].$$
The circulant matrix $\hbox{circ}(\textbf{p}^{T})$ can be decomposed by DFT via \begin{align}\label{decomposition} \hbox{circ}(\textbf{p}^{T})=nF\hbox{diag}(\widehat{p}_{0}, \cdots, \widehat{p}_{n-1})F^{*},\end{align} where    $F$ is the scaled   DFT matrix
\begin{align}\notag\begin{array}{lllllllllllllllll}
F=\\
\frac{1}{n}\left[\begin{array}{lllllllllllllllll}
1&1&1&\cdots&1\\
1&W&W^{2}&\cdots&W^{n-1}\\
1&W^{2}&W^{2\times 2}&\cdots&W^{2\times (n-1)}\\
\vdots&\vdots&\vdots&\ddots&\vdots\\
1&W^{n-1}&W^{(n-1)\times 2}&\cdots&W^{(n-1)\times (n-1)}
\end{array}\right],
\end{array}
\end{align}
with $W=e^{-\textsf{i}2\pi/n}$.
For any  $\textbf{x}\in \mathbb{R}^{n}$, by the fast Fourier transform (FFT), the computation of  $\hbox{circ}(\textbf{p}^{T})\textbf{x}$ only costs $O(n\log n)$-number of operations.
The $\ell_{0}$-norm $||\textbf{x}||_{0}$ of any vector $x$ is defined as the number of its nonzero coordinates.  
By \eqref{decomposition}, the circulant matrix  $\hbox{circ}(\textbf{p}^{T})$ is invertible if and only if $||\widehat{\textbf{p}}||_{0}=n.$

The following lemma tells us how to explicitly  construct a special anchor vector $\textbf{p}$ of $\mathcal{Y}$ in  Lemma \ref{gushi} such that $||\widehat{\textbf{p}}||_{0}=n.$ It will be seen in the proof of Theorem \ref{computational} that such an anchor  vector
is crucial for explicitly constructing a special class of  measurement vectors  that will satisfy the requirements of Theorem \ref{computational}.

\begin{lemma}\label{tjjj}  Let the frame cone $\textbf{\hbox{cone}}(\mathcal{Y})$  of $ \mathbb{R}^{n}$ be as in Lemma \ref{gushi}  such that
$\mathcal{R}(\mathcal{Y}^{T})\cap \mathbb{R}^{+,m}\neq\emptyset$. Then there exists an anchor vector  $\textsf{p}\in \hbox{invim}(\mathcal{R}(\mathcal{Y}^{T})\cap \mathbb{R}^{+,m}) $  such that $ ||\widehat{\textsf{p}}||_{0}=n$.
\end{lemma}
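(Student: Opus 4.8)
The plan is to recast the whole statement as a genericity (``avoidance of a thin set'') argument inside the convex set of anchor vectors. First I would identify the set of all anchor vectors with
\[
\mathcal{C} := \{\textbf{x} \in \mathbb{R}^{n} : \mathcal{Y}^{T}\textbf{x}\succ 0\} = \hbox{invim}(\mathcal{R}(\mathcal{Y}^{T})\cap \mathbb{R}^{+,m}),
\]
the last equality holding because $\mathcal{Y}^{T}\textbf{x}$ always lies in $\mathcal{R}(\mathcal{Y}^{T})$, so the only binding requirement is positivity of all coordinates. This set is nonempty: it is exactly the overlap property \eqref{ufff}, since each $\textbf{z}\in\mathcal{R}(\mathcal{Y}^{T})\cap\mathbb{R}^{+,m}$ is realized as $\mathcal{Y}^{T}\textsf{p}$ with $\textsf{p}$ the anchor vector of \eqref{P1}. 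It is moreover \emph{open}, being the finite intersection of the open half-spaces $\{\langle\textbf{y}_{i},\textbf{x}\rangle>0\}$, and by Lemma \ref{gushi} it contains $n$ linearly independent anchor vectors (the images under \eqref{P1} of the $n$ linearly independent elements of $\mathcal{R}(\mathcal{Y}^{T})\cap\mathbb{R}^{+,m}$), so $\mathcal{C}$ is full-dimensional.

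Next I would encode the target condition $||\widehat{\textsf{p}}||_{0}=n$ as the avoidance of finitely many proper subspaces. For each $j$ the map $\textsf{p}\mapsto\widehat{p}_{j}=\sum_{l=0}^{n-1}p_{l}e^{-\textsf{i}2\pi lj/n}$ is a nonzero linear functional on $\mathbb{R}^{n}$, since its value on $\textbf{e}_{l}$ is $e^{-\textsf{i}2\pi lj/n}\neq 0$; equivalently this is just the invertibility of the DFT matrix already exploited in \eqref{decomposition}. Hence $H_{j}:=\{\textsf{p}\in\mathbb{R}^{n}:\widehat{p}_{j}=0\}$ is a proper subspace, and the exceptional set on which some DFT coordinate vanishes, $B:=\bigcup_{j=0}^{n-1}H_{j}$, is a finite union of proper subspaces, hence closed with empty interior.

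The conclusion then follows from the standard fact that a nonempty open set cannot be covered by finitely many proper subspaces: $\mathbb{R}^{n}\setminus B$ is open and dense, so it meets the nonempty open set $\mathcal{C}$. Any $\textsf{p}\in\mathcal{C}\setminus B$ is simultaneously an anchor vector (membership in $\mathcal{C}$ guarantees $\langle\textsf{p},\textbf{y}_{i}\rangle>0$ for all $i$) and satisfies $\widehat{p}_{j}\neq 0$ for every $j$, i.e. $||\widehat{\textsf{p}}||_{0}=n$, which is precisely the assertion.

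I expect the only genuine subtlety to be the bookkeeping in the first step: verifying that the anchor-vector set really coincides with the open cone $\mathcal{C}$, so that selecting a point of $\mathcal{C}$ automatically delivers the positivity of all measurements, and confirming that $\mathcal{C}$ is full-dimensional rather than merely nonempty. Lemma \ref{gushi} is what secures the latter (openness alone already suffices, but the $n$ independent anchor vectors make full-dimensionality transparent). Everything beyond that is the elementary covering fact, so no serious calculation is needed.
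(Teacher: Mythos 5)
Your proof is correct, and it takes a genuinely different route from the paper's. The paper argues constructively: it invokes Lemma \ref{gushi} to produce $n$ linearly independent vectors in $\mathcal{R}(\mathcal{Y}^{T})\cap\mathbb{R}^{+,m}$, passes to the corresponding anchor vectors $\textbf{a}_{1},\ldots,\textbf{a}_{n}$, shows via the invertibility of $\mathcal{Y}\mathcal{K}_{n}$ that for every frequency $l$ some $\widehat{\textbf{a}}_{j}$ has a nonzero $l$-th DFT coordinate, and then iteratively forms positive combinations $\nu\widehat{\textbf{a}}_{\ell}+\widehat{\textbf{a}}_{j}$ to strictly increase $||\cdot||_{0}$ until full support is reached in at most $n$ steps; this procedure is exactly what becomes Algorithm \ref{AL2}. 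You instead observe that the anchor-vector set $\mathcal{C}=\{\textbf{x}:\mathcal{Y}^{T}\textbf{x}\succ 0\}$ is a nonempty open cone (nonemptiness being precisely the overlap property \eqref{ufff}), that each $H_{j}=\{\textsf{p}:\widehat{p}_{j}=0\}$ is a proper subspace because the functional $\textsf{p}\mapsto\widehat{p}_{j}$ is nonzero, and that a nonempty open set cannot be covered by finitely many proper subspaces. All three steps check out (note the $H_{j}$ are real subspaces of codimension $1$ or $2$ depending on whether the functional is real-valued, but proper in every case, which is all you need), and your full-dimensionality remark is indeed superfluous since openness plus nonemptiness already suffices. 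Your argument is shorter, does not need the frame hypothesis or Lemma \ref{gushi} at all, and in fact proves more: a \emph{generic} anchor vector (dense, full-measure in $\mathcal{C}$) has full DFT support. What it does not deliver is the explicit finite procedure for upgrading a given anchor vector $\textsf{q}_{1}$ to one with $||\widehat{\textsf{p}}||_{0}=n$, which the paper relies on for Algorithm \ref{AL2} and the subsequent construction of the measurement vectors; so the paper's longer proof is doing extra work that the implementation actually uses.
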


\begin{proof}
As in the proof of Lemma \ref{gushi},  we  assume that the first $n$-column vectors  $\{\textbf{y}_{1},\ldots, \textbf{y}_{n}\}$ of $\mathcal{Y}$ form a basis of $\mathbb{R}^{n}$.
For convenient narration, denote $\mathcal{Y}_{n}:=[\textbf{y}_{1},\ldots, \textbf{y}_{n}]$.
By Lemma  \ref{gushi}, there exist $n$-linearly independent vectors $\textbf{z}_{k}=(z_{1,k},  \ldots, z_{n,k},$
$ \ldots, z_{m,k})^{T}$
$\in \mathcal{R}(\mathcal{Y}^{T})\cap \mathbb{R}^{+,m}$, where  $k=1, \ldots, n.$
Define $\mathcal{K}_{n}:=[\textbf{z}_{1}, \textbf{z}_{2}, \ldots, \textbf{z}_{n}]$, and  as in \eqref{maoc},
$ \textbf{a}_{k}:=\mathcal{Y}_{n}^{-T}(z_{1,k}, \ldots, z_{n,k})^{T}.$
Then $\hbox{rank}([\textbf{a}_{1}, \ldots, \textbf{a}_{n}])=n.$ Moreover, by \eqref{P1}, $[\textbf{a}_{1}, \ldots, \textbf{a}_{n}]$
$=(\mathcal{Y}\mathcal{Y}^{T})^{-1}\mathcal{Y}\mathcal{K}_{n}$.
Therefore, $\hbox{rank}(\mathcal{Y}\mathcal{K}_{n})=n.$
Now for any fixed  $l\in \{1, 2, \ldots, n\}$, there exists a column  vector $\widehat{\textbf{a}}_{j}:=(\widehat{a}_{j,1}, \ldots, \widehat{a}_{j,n})^{T}$ of $$ [\widehat{\textbf{a}}_{1}, \ldots, \widehat{\textbf{a}}_{n}]=F[\textbf{a}_{1}, \ldots, \textbf{a}_{n}]=F(\mathcal{Y}\mathcal{Y}^{T})^{-1}\mathcal{Y}\mathcal{K}_{n}$$ such that \begin{align}\label{bukong} \widehat{a}_{j,l}\neq0.\end{align} If not, then
it is easy to conclude that $A(l,:)\mathcal{Y}\mathcal{K}_{n}=O,$ where $A(l,:)$ is the $l$-th row of    $A:=F(\mathcal{Y}\mathcal{Y}^{T})^{-1}$. From the  invertibility of $\mathcal{Y}\mathcal{K}_{n}$,
we deduce that $A(l,:)=O$, which is a contradiction with the invertibility of $A$.

Pick a vector $\widehat{\textbf{a}}_{\ell}\in \{\widehat{\textbf{a}}_{1}, \ldots, \widehat{\textbf{a}}_{n}\}$. If $||\widehat{\textbf{a}}_{\ell}||_{0}=n$, then the proof is completed by letting $\textsf{p}:= F^{*}\widehat{\textbf{a}}_{\ell}$. Otherwise, by the property \eqref{bukong},  there exists $\widehat{\textbf{a}}_{j}\in \{\widehat{\textbf{a}}_{1}, \ldots, \widehat{\textbf{a}}_{n}\}$ such that
$(\hbox{supp}(\widehat{\textbf{a}}_{\ell}))^{c}\cap\hbox{supp}(\widehat{\textbf{a}}_{j})\neq\emptyset$, where $\hbox{supp}(\widehat{\textbf{a}}_{j})$ is the support of  $\widehat{\textbf{a}}_{j}$, and $(\hbox{supp}(\widehat{\textbf{a}}_{\ell}))^{c}=\{1,2, \ldots, n\}\backslash \hbox{supp}(\widehat{\textbf{a}}_{\ell})$. It is easy to prove that $||\nu \widehat{\textbf{a}}_{\ell}+\widehat{\textbf{a}}_{j}||_{0}\geq ||\widehat{\textbf{a}}_{\ell}||_{0}+1$, where $$\nu>\max_{l\in \hbox{supp}(\widehat{\textbf{a}}_{\ell})}|\frac{\widehat{a}_{j,l}}{\widehat{a}_{\ell,l}}|.$$
On the other hand, it is obvious that  $\nu \widehat{\textbf{a}}_{\ell}+\widehat{\textbf{a}}_{j}\in F(\mathcal{Y}\mathcal{Y}^{T})^{-1}\mathcal{Y}\big(\mathcal{R}(\mathcal{Y}^{T})\cap \mathbb{R}^{+,m}\big).$
Thus by  at most $n$-procedures discussed  above, we will be able to get  a vector $\widehat{\textbf{a}}\in F(\mathcal{Y}\mathcal{Y}^{T})^{-1}\mathcal{Y}\big(\mathcal{R}(\mathcal{Y}^{T})\cap \mathbb{R}^{+,m}\big)$
such that $||\widehat{\textbf{a}}||_{0}=n.$  Therefore   \begin{align}\label{tql} \textsf{p}:=F^{*}\widehat{\textbf{a}}\end{align} is an anchor vector satisfying   $ ||\widehat{\textsf{p}}||_{0}=n$. \end{proof}

Next based on the proofs of Lemmas \ref{gushi} and \ref{tjjj}, we establish Algorithm \ref{AL2} for designing an anchor vector
$\textsf{p}\in \hbox{invim}(\mathcal{R}(\mathcal{Y}^{T})\cap \mathbb{R}^{+,m})$ such that $ ||\widehat{\textsf{p}}||_{0}=n$.

\begin{algorithm}\label{AL2}
    \caption{Based on   $\textsf{q}_{1}\in \hbox{invim}(\mathcal{R}(\mathcal{Y}^{T})\cap \mathbb{R}^{+,m})$, design  $\textsf{p}\in \hbox{invim}(\mathcal{R}(\mathcal{Y}^{T})\cap \mathbb{R}^{+,m})$ such that $ ||\widehat{\textsf{p}}||_{0}=n$. }
     \KwIn{$\mathcal{Y}=[\textbf{y}_{1},\ldots, \textbf{y}_{m}]\in \mathbb{R}^{n\times m}$, $\textsf{q}_{1}\in \hbox{invim}(\mathcal{R}(\mathcal{Y}^{T})\cap \mathbb{R}^{+,m})$,   $\textbf{z}_{1}=\mathcal{Y}^{T}
     \textsf{q}_{1}$, $\widehat{\textsf{q}}_{1}= F \textsf{q}_{1}.$}

  \If {$||\widehat{\textsf{q}}_{1}||_{0}<n$}
 {Extend $\textbf{z}_{1}$ to linearly independent vectors   $ \{\textbf{z}_{k}\}^{n}_{k=1}\subseteq \mathcal{R}(\mathcal{Y}^{T})\cap \mathbb{R}^{+,m}$
by using   \eqref{maoc} and \eqref{haixiang};
$[\widehat{\textsf{q}}_{2},  \ldots, \widehat{\textsf{q}}_{n}]\leftarrow F(\mathcal{Y}\mathcal{Y}^{T})^{-1}\mathcal{Y}[\textbf{z}_{2}, \ldots, \textbf{z}_{n}]$;

\For{$j=2:n$}
{Find $\widehat{\textsf{q}}_{l}\in \{\widehat{\textsf{q}}_{2},  \ldots, \widehat{\textsf{q}}_{n}\}$ such that $(\hbox{supp}(\widehat{\textsf{q}}_{1}))^{c}\cap\hbox{supp}(\widehat{\textsf{q}}_{l})\neq\emptyset$. Pick  $\nu>\max_{\ell\in \hbox{supp}(\widehat{\textbf{q}_{1}})}\frac{|\widehat{\textbf{q}_{l}}_{,\ell}|}{|\widehat{\textbf{q}_{1}}_{,\ell}|}$;
$\widehat{\textbf{q}_{1}}\gets\nu \widehat{\textbf{q}_{1}}+\widehat{\textbf{q}_{l}}$;

\If{$||\widehat{\textbf{q}_{1}}||_{0}=n$}
{break;}
  }}

\KwOut{$\textsf{p}=F^{*}\widehat{\textbf{q}}_{1}$. }
\end{algorithm}

\subsection{Proof of Theorem \ref{computational}}\label{opq}
The proof will be  concluded from two cases: frame cone and non-frame cone.
\subsubsection{\textbf{$\textbf{\hbox{cone}}(\mathcal{X})$ is a frame cone}}\label{dshuj}

Obviously,  $\gamma=\hbox{rank}(\mathcal{X})=n.$
By Algorithm \ref{AL2}, we can construct  an anchor vector  $\textsf{p}_{1}\in \hbox{invim}(\mathcal{R}(\mathcal{X}^{T})\cap \mathbb{R}^{+,m}) $ such that \begin{align} \label{anchor}||\widehat{\textsf{p}_{1}}||_{0}=n.\end{align}
 Thus the circulant matrix $\hbox{circ}(\textsf{p}^{T}_{1})$ is invertible. Denote $\hbox{circ}(\textsf{p}^{T}_{1})=[\textsf{p}_{1}, \textsf{p}_{2}, \ldots, \textsf{p}_{n}]^{T}$.  Let $\textsf{f}_{1}:=\textsf{p}_{1}$
 and design  $\{\textsf{f}_{k}\}^{n}_{k=2}$ by
\begin{align}\label{celxl}
\textsf{f}_{k}=\delta_{k}\textsf{p}_{1}+\textsf{p}_{k}, k\geq2,
\end{align}
where $\delta_{k}>0$ is selected in such a way that any $\textbf{x}_{l}\in \{\textbf{x}_{1},\ldots, \textbf{x}_{m}\}$
satisfies
\begin{align}
\label{df1} \langle \textbf{x}_{l},  \textsf{f}_{k}\rangle>0.
\end{align}

It follows from   \eqref{df1} that
$\hbox{sgn}(\langle \textbf{z}, \textsf{f}_{k}\rangle)\geq0$ for any $\textbf{z}\in \textbf{\hbox{cone}}(\mathcal{X})$ and $k=2, \ldots, n.$ On the other hand,  it is easy to follow from
\begin{align}\label{fcheng00} \left[\begin{array}{lllllllllllllllll}
1&0&0&\cdots&0\\
\delta_{2}&1&0&\cdots&0\\
\delta_{3}&0&1&\cdots&0\\
\vdots&\vdots&\vdots&\ddots&\vdots\\
\delta_{n}&0&0&\cdots&1
\end{array}\right]\left[\begin{array}{lllllllllllllllll}
\textsf{p}^{T}_{1}\\
\textsf{p}^{T}_{2}\\
\textsf{p}^{T}_{3}\\
\vdots\\
\textsf{p}^{T}_{n}
\end{array}\right]=\left[\begin{array}{lllllllllllllllll}
\textsf{f}^{T}_{1}\\
\textsf{f}^{T}_{2}\\
\textsf{f}^{T}_{3}\\
\vdots\\
\textsf{f}^{T}_{n}
\end{array}\right]\end{align}
that $\{\textsf{f}_{k}\}^{n}_{k=1}$ is a basis  of $\mathbb{C}^{n}$. Thus the  target  signal $\textbf{z}\in \textbf{\hbox{cone}}(\mathcal{X})$ can be determined, up to a global sign, by the following linear system of  equations
\begin{align}\notag \left[\begin{array}{lllllllllllllllll}
1&0&0&\cdots&0\\
\delta_{2}&1&0&\cdots&0\\
\delta_{3}&0&1&\cdots&0\\
\vdots&\vdots&\vdots&\ddots&\vdots\\
\delta_{n}&0&0&\cdots&1
\end{array}\right]\left[\begin{array}{lllllllllllllllll}
\textsf{p}^{T}_{1}\\
\textsf{p}^{T}_{2}\\
\textsf{p}^{T}_{3}\\
\vdots\\
\textsf{p}^{T}_{n}
\end{array}\right]\textbf{z}=\left[\begin{array}{lllllllllllllllll}
|\langle \textbf{z},  \textsf{f}_{1}\rangle|\\
|\langle \textbf{z},  \textsf{f}_{2}\rangle|\\
|\langle \textbf{z},  \textsf{f}_{3}\rangle|\\
\quad  \vdots\\
|\langle \textbf{z},  \textsf{f}_{n}\rangle|
\end{array}\right].\end{align}
By \eqref{decomposition}, the above system  can be rewritten as
\begin{align}\notag \begin{array}{ll} n\left[\begin{array}{lllllllllllllllll}
1&0&0&\cdots&0\\
\delta_{2}&1&0&\cdots&0\\
\delta_{3}&0&1&\cdots&0\\
\vdots&\vdots&\vdots&\ddots&\vdots\\
\delta_{n}&0&0&\cdots&1
\end{array}\right]F\hbox{diag}(\widehat{p}_{0}, \cdots, \widehat{p}_{n-1})F^{*}\textbf{z}\\
=\left[\begin{array}{lllllllllllllllll}
|\langle \textbf{z},  \textsf{f}_{1}\rangle|\\
|\langle \textbf{z},  \textsf{f}_{2}\rangle|\\
|\langle \textbf{z},  \textsf{f}_{3}\rangle|\\
\quad  \vdots\\
|\langle \textbf{z},  \textsf{f}_{n}\rangle|
\end{array}\right].\end{array}\end{align}
That is, up to a global sign, $\textbf{z}$ can be recovered by
\begin{align}\label{fastchonggou}\begin{array}{lllllllllllllllll}
\textbf{z}=\\
\textcolor[rgb]{0.00,0.07,1.00}{\hbox{FFT}}\Big(\hbox{diag}^{-1}(\textcolor[rgb]{0.00,0.07,1.00}{\hbox{FFT}}(\textsf{p}^{T}_{1}))\textcolor[rgb]{0.00,0.07,1.00}{\hbox{IFFT}}\Big(\left[\begin{array}{lllllllllllllllll}
1&0&\cdots&0\\
-\delta_{2}&1&\cdots&0\\
\vdots&\vdots&\ddots&\vdots\\
-\delta_{n}&0&\cdots&1
\end{array}\right]\\
\times \left[\begin{array}{lllllllllllllllll}
|\langle \textbf{z},  \textsf{f}_{1}\rangle|\\
|\langle \textbf{z},  \textsf{f}_{2}\rangle|\\
\quad \ \vdots\\
|\langle \textbf{z},  \textsf{f}_{n}\rangle|
\end{array}\right]\Big)\Big).
\end{array}
\end{align}
It is easy to see that the computational complexity of \eqref{fastchonggou} is $O(n\log n)$.

\subsubsection{\textbf{$\textbf{\hbox{cone}}(\mathcal{X})$ is not  a frame cone}}\label{dashujiao}
Denote $\gamma:=\hbox{rank}(\mathcal{X}).$
Then $\gamma<n.$
Define an isometry $\mathfrak{P}: \hbox{span}\{\textbf{x}_{1},\ldots, \textbf{x}_{m}\}\longrightarrow \mathbb{R}^{\gamma}$.
Specifically, \begin{align}\label{isometry} \begin{array}{lllllllllllllllll} \mathfrak{P}(\widetilde{\textbf{e}}_{k})=\textbf{e}_{k}, k=1, \ldots, \gamma,\end{array}\end{align}
where $\{\widetilde{\textbf{e}}_{k}\}^{\gamma}_{k=1}$ and $\{\textbf{e}_{k}\}^{\gamma}_{k=1}$ are the orthornormal basis and the standard orthornormal  basis of $\hbox{span}\{\textbf{x}_{1},\ldots, \textbf{x}_{m}\}$ and $\mathbb{R}^{\gamma}$, respectively. Denote $\mathcal{Y}:=\mathfrak{P}\mathcal{X}$.
By the linear and isometry   property,  $\mathfrak{P}(\textbf{\hbox{cone}}(\mathcal{X}))=\textbf{\hbox{cone}}(\mathcal{Y})$, and $\mathcal{Y}$
also satisfies the overlap property \eqref{ufff}. By Algorithm \ref{AL2}, we can design an anchor vector $\widetilde{\textsf{p}}_{1}\in \mathbb{R}^{\gamma}$ of $\mathcal{Y}$
such that $\mathcal{Y}^{T}\widetilde{\textsf{p}}_{1}\succ0$ and $||\widehat{\widetilde{\textsf{p}}_{1}}||_{0}=\gamma.$

Denote $\widetilde{\textsf{f}}_{1}:=\widetilde{\textsf{p}}_{1}\in \mathbb{R}^{\gamma}.$
Invoking Case \ref{dshuj} for $n=\gamma$, we can additionally  design $(\gamma-1)$  vectors $\{\widetilde{\textsf{f}}_{k}\}^{\gamma}_{k=2}$ such that $\{\widetilde{\textsf{f}}_{k}\}^{\gamma}_{k=1}$ are phase retrievable
for $\textbf{\hbox{cone}}(\mathcal{Y})$.
That is, any signal  $\widetilde{\textbf{z}}\in \textbf{\hbox{cone}}(\mathcal{Y})$ can be determined by the $\gamma$ magnitude
measurements $\{|\langle \widetilde{\textbf{z}}, \widetilde{\textsf{f}}_{k} \rangle|\}^{\gamma}_{k=1}$, and
the corresponding complexity is $O(\gamma\log \gamma)$. Particularly, for the target $\textbf{z}$, its projection  $\mathfrak{P}\textbf{z}$ can be recovered by invoking \eqref{fastchonggou}, namely,
\begin{align}\label{fastchonggoulow12}\begin{array}{lllllllllllllllll}
\mathfrak{P}\textbf{z}=\textcolor[rgb]{0.00,0.07,1.00}{\hbox{FFT}}\Big(\hbox{diag}^{-1}(\textcolor[rgb]{0.00,0.07,1.00}{\hbox{FFT}}(\widetilde{\textsf{p}}^{T}_{1}))\\
\times \textcolor[rgb]{0.00,0.07,1.00}{\hbox{IFFT}}\Big(\left[\begin{array}{lllllllllllllllll}
1&0&\cdots&0\\
-\delta_{2}&1&\cdots&0\\
\vdots&\vdots&\ddots&\vdots\\
-\delta_{\gamma}&0&\cdots&1
\end{array}\right]
\left[\begin{array}{lllllllllllllllll}
|\langle \mathfrak{P}\textbf{z},  \widetilde{\textsf{f}}_{1}\rangle|\\
|\langle \mathfrak{P}\textbf{z},  \widetilde{\textsf{f}}_{2}\rangle|\\
\quad \ \vdots\\
|\langle \mathfrak{P}\textbf{z},  \widetilde{\textsf{f}}_{\gamma}\rangle|
\end{array}\right]\Big)\Big),
\end{array}
\end{align}
where the constants $\{\delta_{k}\}^{\gamma}_{k=2}$ satisfy  \eqref{df1} with $n$, $\mathcal{X}$ and $\textsf{p}_{1}$ being replaced by
$\gamma$, $\mathcal{Y}$ and $\widetilde{\textsf{p}}_{1}$, respectively. Now define
$\textbf{f}_{k}:=\mathfrak{P}^{-1}\widetilde{\textbf{f}}_{k}, k=1, \ldots, \gamma.$ By the isometry
property, we have  $|\langle  \mathfrak{P}\textbf{z}, \widetilde{\textsf{f}}_{k}\rangle|=|\langle \textbf{z}, \textsf{f}_{k}\rangle|$.
Then the recovery formula \eqref{fastchonggoulow12} can be rewritten as
\begin{align}\label{fastchogngouxy1}\begin{array}{lllllllllllllllll}
\mathfrak{P}\textbf{z}\\
=\Big[\textcolor[rgb]{0.00,0.07,1.00}{\hbox{FFT}}\Big(\hbox{diag}^{-1}(\textcolor[rgb]{0.00,0.07,1.00}{\hbox{FFT}}((\mathfrak{P}\textsf{f}_{1})^{T}))\\
\\\times \textcolor[rgb]{0.00,0.07,1.00}{\hbox{IFFT}}\Big(\left[\begin{array}{lllllllllllllllll}
1&0&0&\cdots&0\\
-\delta_{2}&1&0&\cdots&0\\
\vdots&\vdots&\vdots&\ddots&\vdots\\
-\delta_{\gamma}&0&0&\cdots&1
\end{array}\right]\left[\begin{array}{lllllllllllllllll}
|\langle \textbf{z},  \textsf{f}_{1}\rangle|\\
|\langle \textbf{z},  \textsf{f}_{2}\rangle|\\
\quad \ \vdots\\
|\langle \textbf{z},  \textsf{f}_{\gamma}\rangle|
\end{array}\right]\Big)\Big)\Big],\end{array}
\end{align}
Denote $\mathfrak{P}\textbf{z}=\sum^{\gamma}_{k=1}c_{k}\textbf{e}_{k}$.
Then \begin{align}\notag  \begin{array}{lllllllllllllllll}\textbf{z}=\mathfrak{P}^{-1}\mathfrak{P}\textbf{z}=\sum^{\gamma}_{k=1}c_{k}\widetilde{\textbf{e}}_{k}\end{array}\end{align} which costs $ O(\gamma)$ operations.
 Then the total complexity is
$ O(\gamma\log \gamma)+O(\gamma)=O(\gamma\log \gamma).$
Integrating  Subsection \ref{dshuj} and \ref{dashujiao}, the proof is concluded. \hfill $\blacksquare$

\subsection{Algorithm for designing measurement vectors for a cone  satisfying the overlap property \eqref{ufff}}
Based on Algorithm \ref{AL2} and Subsection \ref{opq} (the proof of   Theorem \ref{computational}), we propose the following Algorithm \ref{AL3} for explicitly constructing    $\hbox{rank}(\mathcal{X})$-vectors that can be used to perform the  fast phase-retrieval for $\textbf{\hbox{cone}}(\mathcal{X})$.


\begin{algorithm}\label{AL3}
    \caption{Designing $\hbox{rank}(\mathcal{X})$-vectors for the  fast phase-retrieval of  $\textbf{\hbox{cone}}(\mathcal{X})$ satisfying the overlap property \eqref{ufff}. }
     \KwIn{$\mathcal{X}=[\textbf{x}_{1},\ldots, \textbf{x}_{m}]\in \mathbb{R}^{n\times m}$, $\textsf{q}_{1}\in \hbox{invim}(\mathcal{R}(\mathcal{X}^{T})\cap \mathbb{R}^{+,m})$, $\gamma=\hbox{rank}(\mathcal{X})$,  and an  isometry (an $\gamma\times n$ matrix)
     $\mathfrak{P}: \hbox{span}\{\textbf{x}_{1},\ldots, \textbf{x}_{m}\}\longrightarrow \mathbb{R}^{\gamma}$.
      \% If $\gamma=n$, then we just pick $\mathfrak{P}$ as the identity matrix.\% }

     $\mathcal{Y}\gets\mathfrak{P}\mathcal{X}$; $\textsf{q}_{1}\gets\mathfrak{P}\textsf{q}_{1}$.

If $||\widehat{\textsf{q}_{1}}||_{0}<\gamma$, then
using $\textsf{q}_{1}$,
design an anchor vector $\widetilde{\textsf{p}}_{1}\in \hbox{invim}(\mathcal{R}(\mathcal{Y}^{T})\cap \mathbb{R}^{+,m})$ by Algorithm \ref{AL2} such that
$||\widehat{\widetilde{\textsf{p}}_{1}}||_{0}=\gamma.$

 Construct a circulant matrix $[\widetilde{\textsf{p}}_{1}, \widetilde{\textsf{p}}_{2}, \ldots, \widetilde{\textsf{p}}_{n}]^{T}=\hbox{circ}(\widetilde{\textsf{p}}^{T}_{1})$.
Let  $\widetilde{\textbf{f}}_{1}:= \widetilde{\textsf{p}}_{1}$ and design
\begin{align}\label{hq}\widetilde{\textbf{f}}_{k}:=\delta_{k}\widetilde{\textsf{p}}_{1}+\widetilde{\textsf{p}}_{k}, k\geq2,\end{align}
where  $\{\delta_{k}\}^{\gamma}_{k=2}$ is chosen appropriately  such that  \eqref{df1} holds with $n$, $\mathcal{X}$ and $\textsf{p}_{1}$ being replaced by
$\gamma$, $\mathcal{Y}$ and $\widetilde{\textsf{p}}_{1}$, respectively.

\KwOut{\begin{align}\label{mjia}\textbf{f}_{k}\gets \mathfrak{P}^{-1}\widetilde{\textbf{f}}_{k}, k=1, \ldots, \gamma. \end{align}}
\end{algorithm}

The existence of $\{\delta_{k}\}^{\gamma}_{k=2}$ in Algorithm \ref{AL2}  is guaranteed by the following remark.
\begin{remark}
There are many choices for the sequence $\{\delta_{k}\}^{\gamma}_{k=2}$ in \eqref{hq}.
For example, for any $k\in \{2, \ldots, m\}$,
if   \begin{align}\label{ff} \delta_{k}>\frac{||\widetilde{\textsf{p}}_{k}||_{2}\max\{||\textbf{y}_{i}||_{2}: i=1, \ldots, m\}}{\kappa_{\min}},\end{align}
 where $\kappa_{\min}=\min\{\langle \textbf{x}_{1}, \textsf{p}_{1}\rangle, \ldots, \langle \textbf{x}_{m}, \textsf{p}_{1}\rangle\}$,
then  for any $\textbf{x}_{l}\in \{\textbf{x}_{1},\ldots, \textbf{x}_{m}\}$  it follows from \eqref{hq} and \eqref{ff} that
 \begin{align}\begin{array}{lllll}\notag \langle \textbf{x}_{l},  \textsf{f}_{k}\rangle&=\langle \textbf{y}_{l},  \widetilde{\textsf{f}}_{k}\rangle\\
 &\geq \delta_{k}\langle\textbf{y}_{l},\widetilde{\textsf{p}}_{1}\rangle-|\langle
 \textbf{y}_{l}, \widetilde{\textsf{p}}_{k}\rangle|\\
 &=\delta_{k}\langle\textbf{x}_{l},\textsf{p}_{1}\rangle-|\langle
 \textbf{y}_{l}, \widetilde{\textsf{p}}_{k}\rangle|\\
 &\geq\delta_{k}\kappa_{\min}-||\widetilde{\textsf{p}}_{k}||_{2}\max\{||\textbf{y}_{i}||_{2}: i=1, \ldots, m\}\\
 &\geq0.\hfill \blacksquare
 \end{array}\end{align}
\end{remark}
\subsection{Recovery formula}\label{chonggougongshi}
In this subsection we abstract the recovery formula from Subsection \ref{opq}.
Suppose that the target  $\textbf{z}$ lies  in the detectable UoC $\bigcup^{L}_{l=1} \textbf{\hbox{cone}}(\mathcal{X}_{l})$.
After the detection we find that  $\textbf{z}\in \textbf{\hbox{cone}}(\mathcal{X}_{k})$. If $\textbf{\hbox{cone}}(\mathcal{X}_{k})$ satisfies \eqref{ufff},
then $\textbf{z}$ can be recovered by the following two procedures:

\textbf{P1}:
\begin{align}\label{fastchogngouxy}\begin{array}{lll}\begin{array}{lllllllllllllllll}
\mathfrak{P}\textbf{z}\\
:=\sum^{\gamma}_{k=1}c_{k}\textbf{e}_{k}\\
=\Big[\textcolor[rgb]{0.00,0.07,1.00}{\hbox{FFT}}\Big(\hbox{diag}^{-1}(\textcolor[rgb]{0.00,0.07,1.00}{\hbox{FFT}}((\mathfrak{P}\textsf{f}_{1})^{T}))\\
\times \textcolor[rgb]{0.00,0.07,1.00}{\hbox{IFFT}}\Big(\left[\begin{array}{lllllllllllllllll}
1&0&0&\cdots&0\\
-\delta_{2}&1&0&\cdots&0\\
\vdots&\vdots&\vdots&\ddots&\vdots\\
-\delta_{\gamma}&0&0&\cdots&1
\end{array}\right]\left[\begin{array}{lllllllllllllllll}
|\langle \textbf{z},  \textsf{f}_{1}\rangle|\\
|\langle \textbf{z},  \textsf{f}_{2}\rangle|\\
\quad \ \vdots\\
|\langle \textbf{z},  \textsf{f}_{\gamma}\rangle|
\end{array}\right]\Big)\Big)\Big].\end{array}
\end{array}
\end{align}

\textbf{P2}:
\begin{align}\label{final} \textbf{z}=\mathfrak{P}^{-1}\mathfrak{P}f=\sum^{\gamma}_{k=1}c_{k}\widetilde{\textbf{e}}_{k}.\end{align}

The following note    is helpful  for  conducting    \eqref{fastchogngouxy} and \eqref{final}.
\begin{note}\label{note}
(i) $\gamma=\hbox{rank}(\mathcal{X}_{k})$. (ii) $\{\widetilde{\textbf{e}}_{k}\}^{\gamma}_{k=1}$
and $\{\textbf{e}_{k}\}^{\gamma}_{k=1}$ are the orthornormal basis and the standard orthornormal  basis of $\hbox{span}\{\textbf{x}_{k,1},\ldots, \textbf{x}_{k,m_{k}}\}$ and $\mathbb{R}^{\gamma}$, respectively. The map $\mathfrak{P}: \hbox{span}\{\textbf{x}_{k,1},\ldots, \textbf{x}_{k,m}\}\longrightarrow \mathbb{R}^{\gamma}$
is an  isometry (an $\gamma\times n$ matrix). As in  Algorithm \ref{AL3} we just set $\mathfrak{P}$ to the identity matrix
when $\gamma=n$. (iii)
The  measurement vectors $\{\textsf{f}_{k}\}^{\gamma}_{k=2}$
are designed by Algorithm \ref{AL3} with $\mathcal{X}$ therein  being replaced by $\mathcal{X}_{k}$,
and the   sequence $\{\delta_{k}\}^{\gamma}_{k=2}$ satisfies the requirement in \eqref{hq}. \hfill $\blacksquare$
\end{note}

\subsection{Stability of the recovery formula \eqref{fastchogngouxy} and \eqref{final}}
Since the measurements are often  contaminated by noise in practice, we need to   establish the stability for the recovery in Subsection \ref{chonggougongshi} (\eqref{fastchogngouxy} and \eqref{final})  in the noisy setting.
We will  consider the model for  observing a measurement in the noisy setting:
\begin{align} \label{noisemodel}  \begin{array}{lllllllllllllllll}\widetilde{|\langle \textbf{q}, \textbf{z}\rangle|}=|\langle \textbf{q}, \textbf{z}\rangle|+\underline{\hbox{n}},\end{array}\end{align}
where $\textbf{q}$ represents any measurement vector and the additive  noise $\underline{\hbox{n}}$
obeys the Gaussian distribution, namely,
\begin{align}\label{noisezhengtai}\begin{array}{lllllllllllllllll} \underline{\hbox{n}}\sim \emph{\textbf{N}}(0,\sigma^{2}).\end{array}\end{align}
The chi-square distribution $\chi^{2}(s)$  with $s$ degrees  of  freedom will be  useful for probability estimation.
Its   density function is
$$\rho_{s}(x)=\left\{\begin{array}{lllllllllllllllll}
\frac{1}{2^{s/2}G(\frac{s}{2})}x^{s/2-1}e^{-x/2},& x>0,\\
0,& x\leq0,
\end{array}\right.$$
with the Gamma function $G(t):=\int^{\infty}_{0}x^{t-1}e^{-x}dx.$ Denote the distribution function by $\Phi_{s}(t):=\int^{t}_{-\infty}\rho_{s}(x) dx$.

 \begin{figure}\label{Figurecvx}
    \centering
\includegraphics[width=9cm, height=6cm]{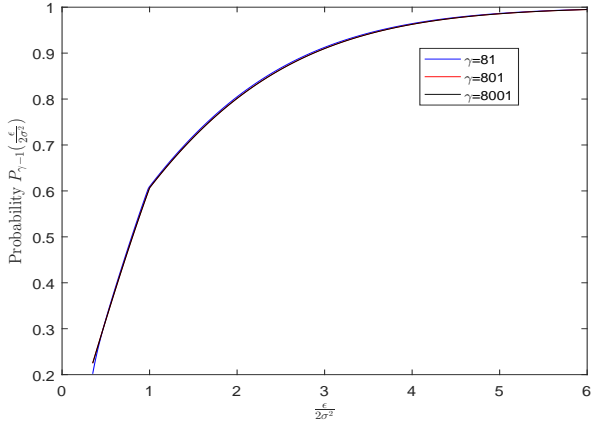}
    \caption{The probability  $P_{\gamma-1}(\frac{\epsilon}{2\sigma^{2}})$ in \eqref{prob} vs $\frac{\epsilon}{2\sigma^{2}}$.}
\end{figure}

\begin{theorem}\label{ghhh}
Let the target   $\textbf{z}\in \textbf{\hbox{cone}}(\mathcal{X}_{k})$ be as in Subsection \ref{chonggougongshi}.
Consequently, it can be recovered by \eqref{fastchogngouxy} and \eqref{final}. Suppose that the measurements
 $|\langle \textbf{f}_{k}, \textbf{z}\rangle|$ used for \eqref{fastchogngouxy} is  contaminated by the  noise $\underline{\hbox{n}}_{k}$ obeying the Gaussian  distribution in \eqref{noisezhengtai},
 $k=1, \ldots, \gamma$.
Then for any fixed $\epsilon>0$, with at least the following probability
\begin{align}\label{prob}\begin{array}{lllllllllllllllll}
P_{\gamma-1}(\frac{\epsilon}{2\sigma^{2}})\\
=-1+\Phi_{\gamma-1}(\gamma-1+\frac{\gamma \epsilon}{2\sigma^{2}})+\Phi_{1}(1+\frac{\gamma\epsilon}{2\sigma^2(\gamma-1)})\\
\quad -\Phi_{\gamma-1}(\gamma-1-\frac{\gamma \epsilon}{2\sigma^{2}})-\Phi_{1}(1-\frac{\gamma\epsilon}{2\sigma^2(\gamma-1)}),
\end{array}\end{align}
the recovery error is bounded by
\begin{align}\label{noiseerror}\begin{array}{lllllllllllllllll}
\min\{||\textbf{z}-\widetilde{\textbf{z}}_r||_{2}, ||\textbf{z}+\widetilde{\textbf{z}}_r||_{2}\}\\
\leq \frac{\sqrt{2||\underline{\textbf{n}}||^{2}_2+\max\{\delta_{2}, \ldots, \delta_{\gamma}\}[(\gamma-1)\epsilon+\frac{\gamma-1}{\gamma}||\underline{\textbf{n}}||^{2}_2]}}{\min|\textcolor[rgb]{0.00,0.07,1.00}{\hbox{FFT}}(\mathfrak{P}\textsf{f}_{1})|},
\end{array}
\end{align}
where $\underline{\textbf{n}}=(\underline{\hbox{n}}_{1}, \ldots, \underline{\hbox{n}}_{\gamma})$ and $\widetilde{\textbf{z}}_r$ is the recovery result from \eqref{fastchogngouxy}
 and \eqref{final} in the noisy setting.
\end{theorem}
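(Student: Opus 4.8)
The plan is to exploit the fact that the reconstruction map \eqref{fastchogngouxy}--\eqref{final} is \emph{linear} in its input vector of magnitudes, so that the reconstruction error is simply that map applied to the pure noise. Writing $\textbf{b}=(|\langle\textbf{z},\textsf{f}_{1}\rangle|,\ldots,|\langle\textbf{z},\textsf{f}_{\gamma}\rangle|)^{T}$ and letting $\mathcal{T}$ denote the lower-triangular matrix in \eqref{fastchogngouxy} (first column $(1,-\delta_{2},\ldots,-\delta_{\gamma})^{T}$, identity elsewhere), the noiseless formula is $\mathfrak{P}\textbf{z}=\hbox{FFT}(\hbox{diag}^{-1}(\hbox{FFT}((\mathfrak{P}\textsf{f}_{1})^{T}))\,\hbox{IFFT}(\mathcal{T}\textbf{b}))$. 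Since the noisy observation vector is $\widetilde{\textbf{b}}=\textbf{b}+\underline{\textbf{n}}$ by \eqref{noisemodel}, linearity gives $\mathfrak{P}\textbf{z}-\mathfrak{P}\widetilde{\textbf{z}}_{r}=-\hbox{FFT}(\hbox{diag}^{-1}(\hbox{FFT}((\mathfrak{P}\textsf{f}_{1})^{T}))\,\hbox{IFFT}(\mathcal{T}\underline{\textbf{n}}))$. Because $\mathfrak{P}$ is an isometry, after choosing the correct global sign we have $\|\textbf{z}-\widetilde{\textbf{z}}_{r}\|_{2}=\|\mathfrak{P}\textbf{z}-\mathfrak{P}\widetilde{\textbf{z}}_{r}\|_{2}$, which is exactly the minimum on the left of \eqref{noiseerror}.

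Next I would bound the operator norm of this error map. By the scaled-DFT decomposition \eqref{decomposition}, $\hbox{FFT}$ and $\hbox{IFFT}$ are $\sqrt{\gamma}$ and $\gamma^{-1/2}$ times unitary maps, so $\|\hbox{FFT}(\textbf{v})\|_{2}=\sqrt{\gamma}\|\textbf{v}\|_{2}$ and $\|\hbox{IFFT}(\textbf{v})\|_{2}=\gamma^{-1/2}\|\textbf{v}\|_{2}$; the two $\sqrt{\gamma}$ factors cancel and the only surviving scaling is the coordinatewise division by $\hbox{FFT}(\mathfrak{P}\textsf{f}_{1})$, of operator norm $1/\min|\hbox{FFT}(\mathfrak{P}\textsf{f}_{1})|$ (finite, since $\|\widehat{\widetilde{\textsf{p}}}_{1}\|_{0}=\gamma$ by the construction in Theorem \ref{computational}). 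This gives $\|\textbf{z}-\widetilde{\textbf{z}}_{r}\|_{2}\le \|\mathcal{T}\underline{\textbf{n}}\|_{2}/\min|\hbox{FFT}(\mathfrak{P}\textsf{f}_{1})|$, producing the denominator of \eqref{noiseerror}, so that everything reduces to estimating the single scalar $\|\mathcal{T}\underline{\textbf{n}}\|_{2}^{2}$.

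Expanding, $\|\mathcal{T}\underline{\textbf{n}}\|_{2}^{2}=\underline{\hbox{n}}_{1}^{2}+\sum_{k=2}^{\gamma}(\underline{\hbox{n}}_{k}-\delta_{k}\underline{\hbox{n}}_{1})^{2}$, and I would peel off a purely deterministic part using $\underline{\hbox{n}}_{1}^{2}+2\sum_{k\ge 2}\underline{\hbox{n}}_{k}^{2}\le 2\|\underline{\textbf{n}}\|_{2}^{2}$ together with elementary bounds such as $2|\underline{\hbox{n}}_{1}\underline{\hbox{n}}_{k}|\le \underline{\hbox{n}}_{1}^{2}+\underline{\hbox{n}}_{k}^{2}$ and $\delta_{k}\le\max_{j}\delta_{j}$. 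This accounts for the leading $2\|\underline{\textbf{n}}\|_{2}^{2}$ and leaves a remainder controlled by the two \emph{independent} quadratic forms $\underline{\hbox{n}}_{1}^{2}$ and $\sum_{k=2}^{\gamma}\underline{\hbox{n}}_{k}^{2}$, which is where the probability in \eqref{prob} enters: $\underline{\hbox{n}}_{1}^{2}/\sigma^{2}\sim\chi^{2}(1)$ and $\sigma^{-2}\sum_{k\ge 2}\underline{\hbox{n}}_{k}^{2}\sim\chi^{2}(\gamma-1)$. The event underlying \eqref{prob} is exactly the intersection of the two-sided windows $\{|\sigma^{-2}\sum_{k\ge2}\underline{\hbox{n}}_{k}^{2}-(\gamma-1)|\le \gamma\epsilon/(2\sigma^{2})\}$ and $\{|\underline{\hbox{n}}_{1}^{2}/\sigma^{2}-1|\le \gamma\epsilon/(2(\gamma-1)\sigma^{2})\}$, whose probability is at least $\Phi_{\gamma-1}(\gamma-1+\gamma\epsilon/2\sigma^{2})-\Phi_{\gamma-1}(\gamma-1-\gamma\epsilon/2\sigma^{2})+\Phi_{1}(\cdots)-\Phi_{1}(\cdots)-1=P_{\gamma-1}(\epsilon/2\sigma^{2})$ by the Bonferroni bound $P(A\cap B)\ge P(A)+P(B)-1$.

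On that event the two windows combine cleanly: subtracting the lower bound for $(\gamma-1)\underline{\hbox{n}}_{1}^{2}$ from the upper bound for $\sum_{k\ge2}\underline{\hbox{n}}_{k}^{2}$ yields $\sum_{k\ge2}\underline{\hbox{n}}_{k}^{2}-(\gamma-1)\underline{\hbox{n}}_{1}^{2}\le\gamma\epsilon$, and hence $\sum_{k\ge2}\underline{\hbox{n}}_{k}^{2}\le \tfrac{\gamma-1}{\gamma}\|\underline{\textbf{n}}\|_{2}^{2}+\epsilon$; this identity is precisely the origin of the coefficients $\tfrac{\gamma-1}{\gamma}$ and the $\epsilon$-term in \eqref{noiseerror}. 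Substituting these estimates into the remainder and carrying the $\max_{j}\delta_{j}$ factor then delivers the numerator $\sqrt{2\|\underline{\textbf{n}}\|_{2}^{2}+\max\{\delta_{2},\ldots,\delta_{\gamma}\}[(\gamma-1)\epsilon+\tfrac{\gamma-1}{\gamma}\|\underline{\textbf{n}}\|_{2}^{2}]}$. I expect this last bookkeeping to be the main obstacle: one must arrange the elementary inequalities so the $\max_{j}\delta_{j}$ factor and the $(\gamma-1)$ multiplier attach to the correct pieces and so that every invocation of the chi-square windows uses arguments matching \eqref{prob}. By contrast, the linearization, the isometry reduction, and the unitarity-based operator bound are routine.
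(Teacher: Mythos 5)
Your proposal follows essentially the same route as the paper's own proof: linearize so the error is the reconstruction map applied to the pure noise, use the $\sqrt{\gamma}$-scaling of FFT/IFFT to reduce the operator bound to the diagonal factor $1/\min|\hbox{FFT}(\mathfrak{P}\textsf{f}_{1})|$, isolate the random part as $\underline{\hbox{n}}_{1}^{2}$ versus $\|\underline{\textbf{n}}\|_{2}^{2}/\gamma$, and control it by the two chi-square windows combined with the Bonferroni/union bound, exactly reproducing \eqref{prob} and \eqref{noiseerror}. The final bookkeeping you flag as the main obstacle is handled in the paper by the single asserted inequality \eqref{hhhhh} followed by substituting $(\gamma-1)\underline{\hbox{n}}_{1}^{2}\le(\gamma-1)\epsilon+\frac{\gamma-1}{\gamma}\|\underline{\textbf{n}}\|_{2}^{2}$, so no new idea is needed beyond what you describe.
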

\begin{proof}
The proof is given in the Appendix section.
\end{proof}

The graphs of $P_{\gamma-1}(\frac{\epsilon}{2\sigma^{2}})$ in \eqref{prob} corresponding to $\gamma=81, 801$ and $ 8001$
are plotted in Fig. III.1. It is observed in   Fig. III.1 that   as  $\gamma=\hbox{rank}(\mathcal{X})$ increases,
the behavior of $P_{\gamma-1}(\frac{\epsilon}{2\sigma^{2}})$ changes very mildly.

\section{Numerical simulation}\label{numerical}
We have  established in Section \ref{section2}  the two-step
 PR-scheme  for detectable UoCs (Algorithm \ref{separating} for detection while   \eqref{fastchogngouxy} and \eqref{final} for recovery). As mentioned in Proposition  \ref{complex}, it   requires very few measurements and has low  computational complexity.
On the other hand, as introduced in Section \ref{section1}  some efficient phase retrieval methods   are available  in the literature.
As an iterative method, Alternating Minimization \cite{Netrapalli} converges geometrically to the target, and shows good performance on recovery accuracy.
BlockPR \cite{WY} performs well on the aspect of computational speed.
The task   of this  section  is to
present some numerical simulations demonstrating   the efficiency of the two-step PR-scheme, and to  compare  with
Alternating Minimization  and  BlockPR   on  the aspects  of  time cost, measurement  cost (the amount of
measurements) and   relative error (recovery accuracy).

\subsection{Two-step PR-scheme  for random signals  in the noiseless setting}\label{noiseless}
Let $\mathcal{X}_{1}:=[\textbf{x}_{1,1},\ldots,$
$ \textbf{x}_{1,2n-1}]\in\mathbb{R}^{n\times (2n-1)}.$ Herein
\begin{align} \begin{array}{lllllllllllllllll} \textbf{x}^{T}_{1,1}=\Big[1,\frac{-1}{3\times 2^{3}\times 1}, \frac{1}{3\times 3^{3}\times 2}, \ldots, \frac{(-1)^{l-1}}{3\times l^{3}\times (l-1)}, \ldots, \\
\frac{(-1)^{n-1}}{3\times n^{3}\times (n-1)}\Big],\end{array}\end{align}
and  for $2\leq k\leq n$, $\textbf{x}^{T}_{1,k}=\textbf{x}^{T}_{1,1}\circ \hbox{\textbf{1}}_{k}$, where  $\hbox{\textbf{1}}_{k}:=[1, \ldots, 1, -1, 1, \ldots, 1]$ with $-1$ being the $k$-th element,
$\circ$ is  the element-wise product  of two vectors, and
\begin{align}\notag \begin{array}{lllllllllllllllll} [\textbf{x}_{1,n+1},\ldots, \textbf{x}_{1,2n-1}]\\
:=[\textbf{x}_{1,1},\ldots, \textbf{x}_{1,n}]\left[\begin{array}{cccccccccccccccccc}
b&b&\cdots&b\\
-a&0&\cdots&0\\
0&-a&\cdots&0\\
\vdots&\vdots&\ddots&\vdots\\
0&0&\cdots&-a
\end{array}\right]_{n\times (n-1),} \end{array}\end{align}
with $a=0.115$, $b=0.8850$.
Furthermore define
\begin{align}\notag \begin{array}{lllllllllllllllll} \mathcal{X}_{2}:=[\textbf{x}_{2,1},\ldots, \textbf{x}_{2,n}]\\
=\left[\begin{array}{cccccccccccccccccc}
2&2&\cdots&2\\
-1&-1&\cdots&-1\\
(-1)^{3+1}&(-1)^{3+2}&\cdots&(-1)^{3+n}\\
\vdots&\vdots&\ddots&\vdots\\
(-1)^{n+1}&(-1)^{n+2}&\cdots&(-1)^{n+n}
\end{array}\right]_{n\times n.}
\end{array}\end{align}
Pick   $\textbf{g}:=[1, 2, 0, \ldots, 0]^{T}.$ 
By the  direct computation, we found  that  \begin{align}\label{dcg}\begin{array}{lllllllllllllllll}  \mathcal{X}^{T}_{1}\textbf{g}\succ\textbf{0}, \mathcal{X}^{T}_{2}\textbf{g}=\textbf{0}.  
\end{array}\end{align}
That is, $\bigcup^{2}_{k=1} \textbf{\hbox{cone}}(\mathcal{X}_{k})$ is detectable and $\textbf{g}$ is eligible  as a detector. Moreover it is easy to check that
both $\textbf{\hbox{cone}}(\mathcal{X}_{1})$ and $\textbf{\hbox{cone}}(\mathcal{X}_{2})$
satisfy the  property  \eqref{ufff}.

Any   signal  $\textbf{z}\in \bigcup^{2}_{k=1} \textbf{\hbox{cone}}(\mathcal{X}_{k})$ can be reconstructed  by
 the scheme \eqref{step} which is conducted via    Algorithm \ref{separating} and the  formulas \eqref{fastchogngouxy}, \eqref{final}.
Pick the target random  signal  \begin{align}\label{target} \begin{array}{lllllllllllllllll}   \textbf{z}=\sum^{2n-1}_{k=1}\underline{\epsilon}_{k}\textbf{x}_{1,k}\in \textbf{\hbox{cone}}(\mathcal{X}_{1})\end{array}\end{align}
 as an example to check the   efficiency of  \eqref{step}, where the random variable $\underline{\epsilon}_{k}$
obeys  the uniform distribution on the interval $(0, 1/100)$. In this case, $\gamma=\hbox{rank}(\mathcal{X}_{1})=n $ and as mentioned in
 Note \ref{note}, the  isometry $\mathfrak{P}$ in \eqref{fastchogngouxy} and \eqref{final} is set to the identity matrix.
 There are many choices of $\textbf{q}_{1}\in \hbox{invim}(\mathcal{R}(\mathcal{X}_{1}^{T})\cap \mathbb{R}^{+,2n-1})$ in Algorithm \ref{AL3}.
 For example
 choose $\textbf{q}_{1}=(1, 0, 0, \ldots, 0)^{T}$.
Based on $\textbf{q}_{1}$,  we design $\{\textbf{f}_{k}\}^{n}_{k=1}$ by  using  Algorithm \ref{AL3} such that
 $\textbf{f}_{1}=\textbf{q}_{1}$ and $\{\textbf{f}_{k}\}^{n}_{k=2}$ are  given by \eqref{hq} with  $\delta_k=0.0542$.
 By direct computation, we can check that both \eqref{anchor} and \eqref{df1} hold  with $\textbf{p}_{1}$ therein  being replaced by $\textbf{f}_{1}$. Therefore the $n+1$  vectors $\{\textbf{g}, \textbf{f}_{1}, \ldots, \textbf{f}_{n}\}$ are phase retrievable for the target signal.
Specifically, Algorithm \ref{separating}  is conducted by using the detector  $\textbf{g}$.
After the detection we found $\textbf{z}\in \textbf{\hbox{cone}}(\mathcal{X}_{1})$,
and the recovery  formula  \eqref{fastchogngouxy} is  conducted   by the magnitude measurements
$\{|\langle \textbf{z}, \textbf{f}_{1}\rangle|, \ldots, |\langle \textbf{z}, \textbf{f}_{n}\rangle|\}$.

 We include the simulation results  of
Alternating Minimization  and  BlockPR  for comparison.
 Incidentally, we use the matlab  software available in \cite{softwareforpr}
to conduct   the BlockPR.
 The relative recovery  error is  defined by
 \begin{align}\label{wuchadingyi}\begin{array}{ll}\hbox{error}:=\\
 10\log_{10}\big[\min\{||\textbf{z}-\textbf{z}_{r}||_{2}/||\textbf{z}||_{2}, ||\textbf{z}+\textbf{z}_{r}||_{2}/||\textbf{z}||_{2}\}\big],\end{array}\end{align}
and  it is reported in dB,  where $\textbf{z}_{r}$ is the recovery result.  Here we say that  a target is  successfully recovered if  the  error is smaller  than $-30$ dB ($\min\{||\textbf{z}-\textbf{z}_{r}||_{2}/||\textbf{z}||_{2}, ||\textbf{z}+\textbf{z}_{r}||_{2}/||\textbf{z}||_{2}\}\leq 0.1\%$).
Real-valued standard  Gaussian measurements are used for
Alternating Minimization.  We found in this simulation that the BlockPR  with the  Fourier-like  measurements performs better than that with random
 measurements. Therefore, we use the Fourier-like measurements for  BlockPR.
 We next   compare the  measurement  cost,   relative recovery  error and   time cost  of   the three  methods.

 \begin{figure}\label{Figure3}
    \flushleft
\includegraphics[width=9.3cm, height=8cm]{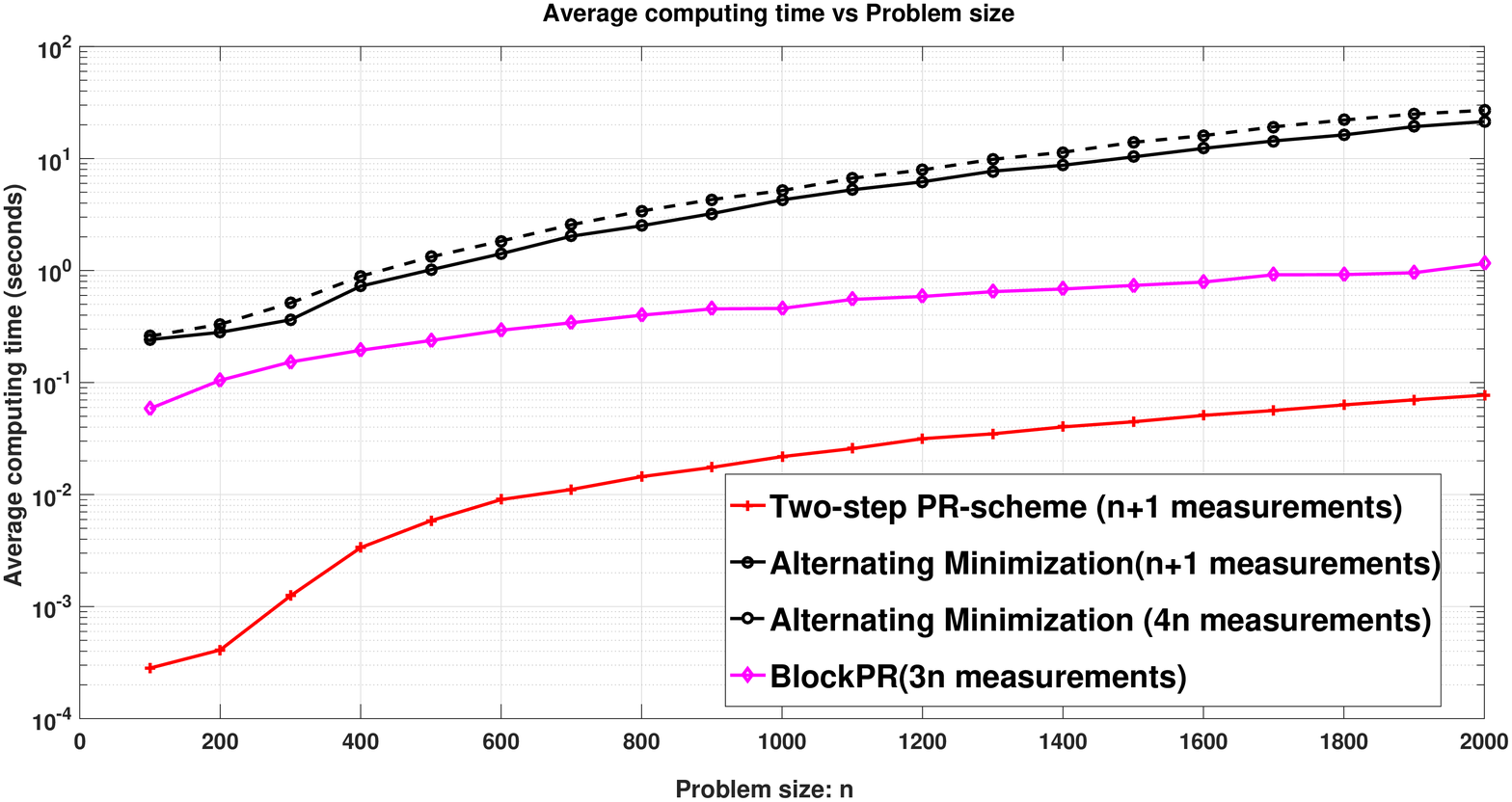}
    \caption{The   computing time   vs problem size $n$  corresponding to two-step PR-scheme, Alternating Minimization and BlockPR.}
\end{figure}

%

 Recall  that the  two-step PR-scheme just  requires  $n+1$ measurements, and
the  BlockPR software  requires at least $3n$   measurements. Therefore for comparing  the computing time
at the same amount of measurements,
we first recover $\textbf{z}$ in \eqref{target} by the  two-step PR-scheme and  Alternating Minimization, respectively.
Both the  two  methods are conducted by $n+1$ measurements for $100$ trials, and their  average  time costs and errors are recorded.
In Fig. IV.2 and Fig. IV.3  we plotted  the numerical results on time cost and errors.
It is observed from the black solid  curve   in  Fig. IV.2  that Alternating Minimization has  the  computational complexity which essentially  scales squarely with the problem size $n$. Actually it follows   from \cite{Netrapalli} that  the theoretic computational complexity  of Alternating Minimization  is $O(n^{2}\log^{2}n(\log n+\log\frac{1}{\epsilon}\log\log\frac{1}{\epsilon}))$, where $\epsilon$ is the computing accuracy. By Proposition  \ref{complex}, the
two-step PR-scheme has the FFT computational complexity $O(n\log n)$ instead.  Obviously, in this simulation   the  two-step PR-scheme costs  much less time than Alternating Minimization.

 The    curve   (in red) in  Fig. IV.3 affirms that,  just requiring $n+1$ measurements, $\textbf{z}$ can be perfectly recovered by the two-step  PR-scheme.
By  direct  observation on the  solid black curve  in   Fig. IV.3, $n+1$ measurements are obviously not sufficient for Alternating Minimization, which is in accordance with \cite{Netrapalli}.
That is, for successfully recovering $\textbf{z}$, more measurements are necessary.

 \begin{figure}\label{Figure3}
    \flushleft
\includegraphics[width=10cm, height=8cm]{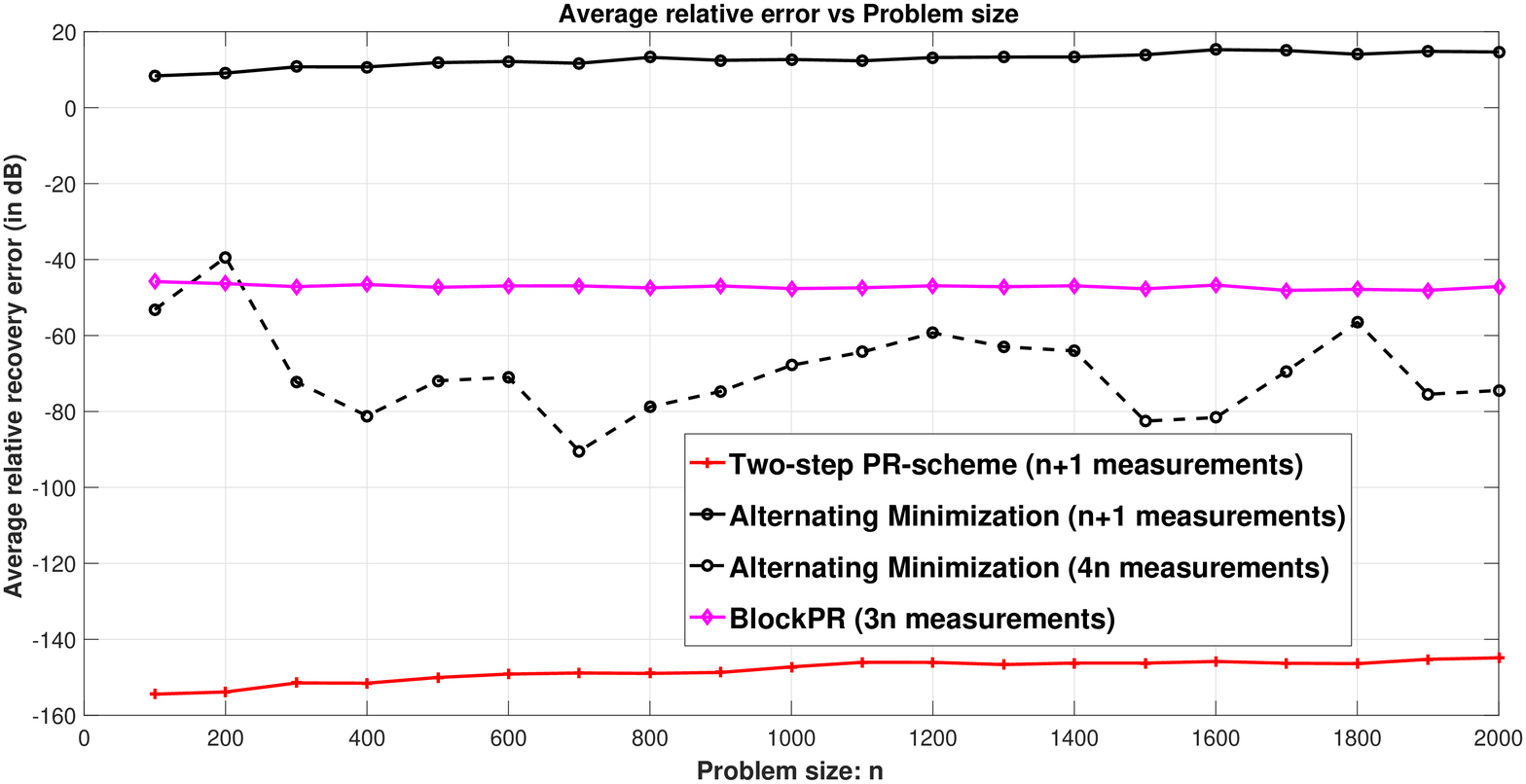}
    \caption{The   recovery error vs problem size $n$  corresponding to two-step PR-scheme, Alternating Minimization and BlockPR.}
\end{figure}

 \begin{figure}\label{Figure4}
  \flushleft
\includegraphics[width=12cm, height=7cm]{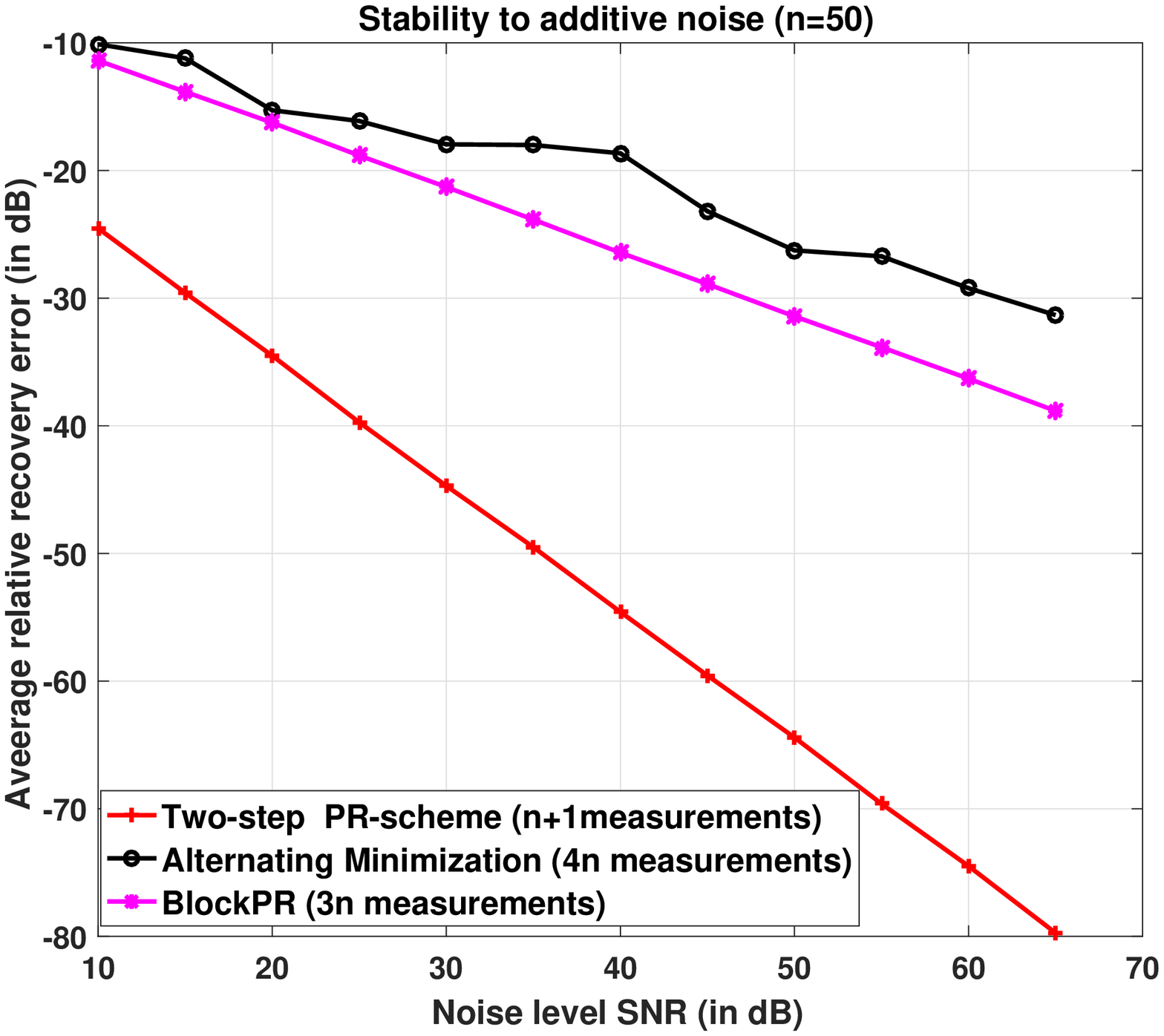}
    \caption{The   recovery error vs the   noise level (SNR) corresponding to the two-step   PR-scheme, Alternating Minimization and BlockPR.}
\end{figure}

 \begin{figure}\label{Figure4}
  \flushleft
\includegraphics[width=13cm, height=7cm]{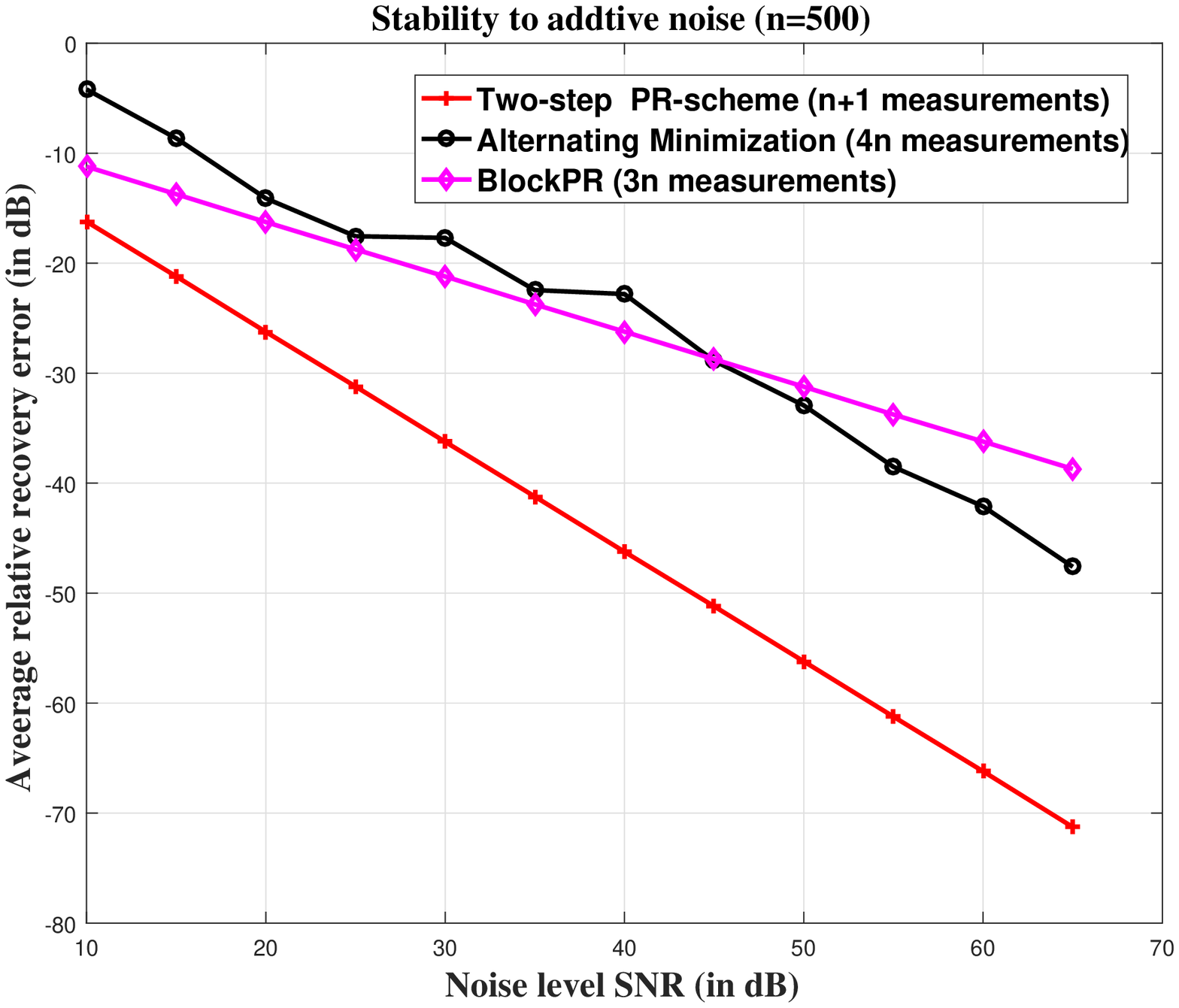}
    \caption{The  recovery error vs the   noise level (SNR) corresponding to the two-step   PR-scheme, Alternating Minimization and BlockPR.}
\end{figure}
Next we continued the simulation for recovering $\textbf{z}$, where $4n$ measurements are used for
 Alternating Minimization, and $3n$ measurements for BlockPR.  We  observed from  Fig. IV.2 that BlockPR  has essentially the FFT computational complexity, but the  two-step PR-scheme has a much smaller constant than  BlockPR. Although
Alternating Minimization    and BlockPR cost  much more measurements and computing time,
it is observed from Fig. IV.3 that the error of the  two-step   PR-scheme is much smaller than theirs. On the other hand, the error  of Alternating Minimization is the second smallest (black and dash curve in Fig. IV.3). Recall that  Alternating Minimization is an iterative method.
Besides on  the amount of random  measurements (the more measurements are used, the better performance it shows with higher probability),   the recovery error  also  depends on  the
 convergence to the target. By \cite{Netrapalli},  Alternating Minimization converges geometrically to the target $\textbf{z}$.
Recall that our     scheme  \eqref{step} is  conducted by Algorithm \ref{separating} and recovery formula  \eqref{fastchogngouxy}.
\begin{figure}\label{Figure4}
  \flushleft
\includegraphics[width=16cm, height=8cm]{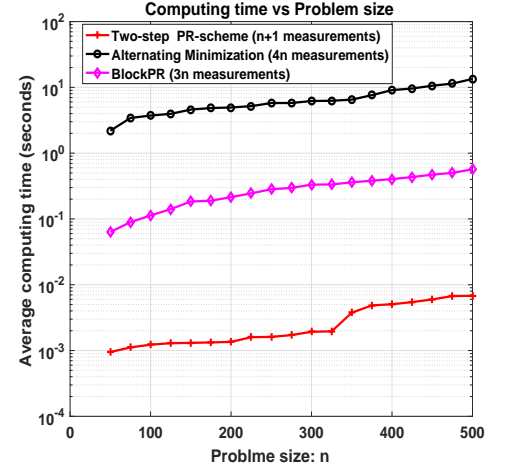}
    \caption{The   computing time  vs the  problem size $n$ corresponding to the two-step   PR-scheme, Alternating Minimization and BlockPR.}
\end{figure}
Neither of the two steps  is iterative instead, and consequently   our scheme is free from the convergence problem in the  computation.
Therefore our recovery error mainly depends on the computing   round-off error. We observed from Fig. IV.3 that  the  round-off error  is very small.

\subsection{Two-step PR-scheme for the random signals in the  noisy setting}
In this subsection we check the  stability   to noise  of our scheme \eqref{step} in the previous simulation, where  any measurement $|\langle \textbf{q}, \textbf{z}\rangle|$  was contaminated by the Gaussian noise
\begin{align}\notag \underline{\hbox{n}}\sim \emph{\textbf{N}}(0,\sigma^{2}).\end{align}
That is, what we observed is
\begin{align}
\widetilde{|\langle \textbf{q}, \textbf{z}\rangle|}=|\langle \textbf{q}, \textbf{z}\rangle|+\underline{\hbox{n}}.
\end{align}

Since the stability  for the recovery formula  \eqref{fastchogngouxy} has been given in Theorem \ref{ghhh},
we  just need to establish the stability for   detection before conducting the numerical simulation  in the noisy setting.
As already  shown in Algorithm \ref{separating} (steps \textbf{5}-\textbf{7}, steps \textbf{11}-\textbf{13}), the  threshold  technique (the threshold value therein  is $0$) was substantially  used in
the detection  strategy in the noiseless setting.
Based on  \eqref{dcg}, for the detection in the noisy setting
we need  to modify the threshold  technique in Algorithm \ref{separating} as follows.
\begin{framed}
 For  $\cup^{2}_{k=1}\textbf{\hbox{cone}}(\mathcal{X}_{k})$ and  a given  threshold value $T$, if the noisy measurement  $\widetilde{|\langle \textbf{g}, \textbf{z}\rangle|} \geq T$, then  the target $\textbf{z}$
is  regarded as not in $\textbf{\hbox{cone}}(\mathcal{X}_{2})$, or else not in  $\textbf{\hbox{cone}}(\mathcal{X}_{1})$.
\end{framed}
A  natural problem is how to choose the  threshold value $T $ such that the target cone can be detected successfully  by  the detection  strategy associated with the above technique.
We give an answer in the following proposition where the lower bound of  $\theta_{1}+\cdots+\theta_{m_{k}}$ is regarded   as the  prior information (The similar  information  was also necessary for the stability of the phase-retrieval in shift-invariant space (Q. Sun et. al \cite{QiyuPR,QiyuPR1})).

\begin{prop}\label{tuilun}
Suppose that  the target $\textbf{z}=\theta_{1}\textbf{x}_{k,1}+\cdots+\theta_{m_{k}}\textbf{x}_{k, m_{k}}\in \textbf{\hbox{cone}}(\mathcal{X}_{k})$ where $k\in \{1, 2\}$, $m_1=2n-1$ and $m_2=n.$ If $\theta_{1}+\cdots+\theta_{m_{k}}\geq r>0$,  then choosing the  threshold value $T:=\frac{r}{2}\min \mathcal{X}^{T}_{1}\textbf{g}$, with at least  the probability $\int^{\frac{r}{2}\min \mathcal{X}^{T}_{1}\textbf{g}}_{-\infty}\frac{1}{\sqrt{2\pi}\sigma}e^{-\frac{(x-\mu)^{2}}{2\sigma^{2}}}dx$,
the target cone can be successfully detected by the modified strategy in the previous  square frame, where $\min \mathcal{X}^{T}_{1}\textbf{g}$ is the minimum
of the $2n-1$ coordinates of the vector $\mathcal{X}^{T}_{1}\textbf{g}$.
\end{prop}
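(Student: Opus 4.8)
The plan is to treat the two cases $k=1$ and $k=2$ separately, identify the distribution of the noisy detection measurement $\widetilde{|\langle \textbf{g}, \textbf{z}\rangle|}$ in each, and then bound from below the probability that the modified threshold rule returns the correct cone. Throughout I write $c:=\min \mathcal{X}_{1}^{T}\textbf{g}$, which is strictly positive since \eqref{dcg} gives $\mathcal{X}_{1}^{T}\textbf{g}\succ\textbf{0}$, so that $T=\frac{r}{2}c$.

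First I would evaluate the noiseless measurement in each case. If $k=2$, then by \eqref{dcg} we have $\mathcal{X}_{2}^{T}\textbf{g}=\textbf{0}$, whence $\langle \textbf{g},\textbf{z}\rangle=\sum_{i}\theta_{i}(\mathcal{X}_{2}^{T}\textbf{g})_{i}=0$; thus $|\langle \textbf{g},\textbf{z}\rangle|=0$ and the noisy measurement reduces to $\widetilde{|\langle \textbf{g},\textbf{z}\rangle|}=\underline{\hbox{n}}\sim \textbf{N}(0,\sigma^{2})$. If $k=1$, then every coordinate of $\mathcal{X}_{1}^{T}\textbf{g}$ is at least $c$ and every $\theta_{i}\geq 0$ with $\sum_{i}\theta_{i}\geq r$, so $\langle \textbf{g},\textbf{z}\rangle=\sum_{i}\theta_{i}(\mathcal{X}_{1}^{T}\textbf{g})_{i}\geq c\sum_{i}\theta_{i}\geq rc>0$; hence $\mu:=|\langle \textbf{g},\textbf{z}\rangle|\geq rc$ and the noisy measurement is Gaussian with mean $\mu$ and variance $\sigma^{2}$.

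Next I would compute the success probability of the threshold rule in each case and show it exceeds the claimed bound. For $k=2$ the rule is correct exactly when $\widetilde{|\langle \textbf{g},\textbf{z}\rangle|}<T$, which occurs with probability $\Pr(\underline{\hbox{n}}<\tfrac{rc}{2})=\int_{-\infty}^{rc/2}\frac{1}{\sqrt{2\pi}\sigma}e^{-x^{2}/(2\sigma^{2})}\,dx$, that is, exactly the integral in the statement with the mean parameter equal to the noise mean $0$. For $k=1$ the rule is correct exactly when $\widetilde{|\langle \textbf{g},\textbf{z}\rangle|}\geq T$; centering by $\mu$ and using the symmetry of the centered Gaussian, this probability equals $\int_{-\infty}^{\mu-rc/2}\frac{1}{\sqrt{2\pi}\sigma}e^{-x^{2}/(2\sigma^{2})}\,dx$, which is at least the $k=2$ value because $\mu-\tfrac{rc}{2}\geq rc-\tfrac{rc}{2}=\tfrac{rc}{2}$. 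Consequently, in either case the detection succeeds with probability at least $\int_{-\infty}^{rc/2}\frac{1}{\sqrt{2\pi}\sigma}e^{-x^{2}/(2\sigma^{2})}\,dx$, which is the asserted lower bound.

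The computation is elementary, so the one point I would justify carefully is the choice $T=\tfrac{rc}{2}$: it is the midpoint between the noiseless value $0$ attained on $\textbf{\hbox{cone}}(\mathcal{X}_{2})$ and the guaranteed lower value $rc$ on $\textbf{\hbox{cone}}(\mathcal{X}_{1})$, so it balances the two misclassification events and makes the two one-sided tail probabilities coincide after the symmetry step. The prior hypothesis $\sum_{i}\theta_{i}\geq r$ is precisely what forces a gap of size at least $rc$ between the two noiseless measurement levels, and this gap is what turns the balanced threshold into a single uniform probability bound that is independent of the particular target signal.
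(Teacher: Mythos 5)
Your proof is correct and follows essentially the same route as the paper: compare the noiseless measurement levels ($0$ on $\textbf{\hbox{cone}}(\mathcal{X}_{2})$ versus at least $r\min\mathcal{X}_{1}^{T}\textbf{g}$ on $\textbf{\hbox{cone}}(\mathcal{X}_{1})$), place the threshold at the midpoint, and bound the Gaussian tail. If anything, your case-by-case treatment with the symmetry step is slightly cleaner than the paper's, which conditions on the single one-sided event $\underline{\hbox{n}}<\frac{r}{2}\min\mathcal{X}_{1}^{T}\textbf{g}$ but then tacitly uses the two-sided bound $|\underline{\hbox{n}}|\leq\frac{r}{2}\min\mathcal{X}_{1}^{T}\textbf{g}$ in the $k=1$ estimate.
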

\begin{proof}
The proposition is proved in the Appendix section.
\end{proof}

To check the stability of the scheme in  \eqref{step}, we  conduct the simulation
in   Subsection \ref{noiseless} by  adding the  Gaussian noise to the magnitude measurements.
Following \cite{WY}, the variance is chosen such that the desired signal to noise ratio  (SNR)
is expressed by
$$\begin{array}{lllllllllllllllll}\notag  \hbox{SNR}=10\log_{10}\Big(\frac{||\mathcal{M}^{T}f||^{2}_{2}}{m\sigma^{2}}\Big),\end{array}$$
where $\mathcal{M}$ is the measurement matrix  having $m$  column vectors. SNR is also reported in dB.
For the  scheme  \eqref{step} in the simulation, $\mathcal{M}=[\textbf{g}, \textbf{f}_{1}, \ldots, \textbf{f}_{n}]$.
We conducted  the two-step  PR-scheme, Alternating Minimization and BlockPR on the random signal
$\textbf{z}$ in \eqref{target} for $100$ trials,  where $50\leq n\leq 500$.  
We plotted the  average error to the noise level  in Fig. IV.4 ($n=50$) and Fig. IV.5 ($n=500$), and the average computing time
to the  dimension size $n$ in Fig. IV.6.

It was  observed from Fig. IV.4-5 that for successfully
recovering the  target $\textbf{z}$ (i.e. the error is smaller than $-30$ dB), the requirement on the noise level of the  two-step PR-scheme  is weakest.
As SNR being large sufficiently,
the  two-step PR-scheme has the   smallest error  among the three methods, which coincides  with the  results in noiseless setting as shown in Fig. IV.3.
Moreover, Fig. IV.6 confirms again  that     the  two-step PR-scheme required the  less  computing time.

\section{Appendix}
\subsection{Proof of Theorem \ref{twocone}}\label{proofttwocone}

The proof will be concluded for the cases of $L=2$ and $L>2,$ respectively.

\textbf{\textbf{Case of $L=2$.}} For this case, \eqref{separatingPRgeneral} is equivalent to  that either
\begin{align}\label{separatingPR} \begin{array}{lllllllllllllllll} \hbox{invim}(\mathcal{R}(\mathcal{X}_{1}^{T})\cap \mathbb{R}^{+,m_{1}})\cap \mathcal{N}(X^{T}_{2})\neq\emptyset ,\end{array}\end{align}
or
\begin{align}\label{separatingPRadd}\begin{array}{lllllllllllllllll} 
\hbox{invim}(\mathcal{R}(\mathcal{X}_{2}^{T})\cap \mathbb{R}^{+,m_{2}})\cap \mathcal{N}(X^{T}_{1})\neq\emptyset\end{array}\end{align}
holds.

\textcolor[rgb]{0.00,0.00,1.00}{Sufficiency}:  If, for example, $\hbox{invim}(\mathcal{R}(\mathcal{X}_{1}^{T})\cap \mathbb{R}^{+,m_{1}})\cap \mathcal{N}(X^{T}_{2})\neq\emptyset$, then we can use a measurement vector $\textbf{g}\in \hbox{invim}(\mathcal{R}(\mathcal{X}_{1}^{T})\cap \mathbb{R}^{+,m_{1}})\cap \mathcal{N}(X^{T}_{2})$ as an detector   to complete the detection. Specifically, for any fixed  target nonzero vector $\textbf{z}\in \bigcup^{2}_{k=1} \textbf{\hbox{cone}}(\mathcal{X}_{k})$, if $|\langle \textbf{z}, \textbf{g}\rangle|\neq0$, then $\textbf{z}\notin \textbf{\hbox{cone}}(\mathcal{X}_{2})$ but
$\textbf{z}\in \textbf{\hbox{cone}}(\mathcal{X}_{1})$. If $|\langle \textbf{z}, \textbf{g}\rangle|=0$, then $\textbf{z}\notin \textbf{\hbox{cone}}(\mathcal{X}_{1})$ but
$\textbf{z}\in \textbf{\hbox{cone}}(\mathcal{X}_{2})$.

\textcolor[rgb]{0.00,0.07,1.00}{Necessity}:  Suppose that  we can use an detector  $\textbf{g}\in \mathbb{R}^n$ to  detect  the source of any  $\textbf{z}\in \bigcup^{2}_{k=1} \textbf{\hbox{cone}}(\mathcal{X}_{k})$.
Then \begin{align}\label{identifier} \begin{array}{lllllllllllllllll} \{|\langle \textbf{g}, \textbf{y}\rangle|: \textbf{y}\in \textbf{\hbox{cone}}(\mathcal{X}_{1})\backslash\{0\}\}\\
\cap \{|\langle \textbf{g}, \textbf{y}\rangle|: \textbf{y}\in \textbf{\hbox{cone}}(\mathcal{X}_{2})\backslash\{0\}\}=\emptyset.\end{array}\end{align}
If $\{|\langle \textbf{g}, \textbf{y}\rangle|: \textbf{y}\in \textbf{\hbox{cone}}(\mathcal{X}_{i})\backslash\{0\}\}\cap \mathbb{R}^{+}\neq\emptyset$, then it is straightforward to check  that
$\{|\langle \textbf{g}, \textbf{y}\rangle|: \textbf{y}\in \textbf{\hbox{cone}}(\mathcal{X}_{i})\backslash\{0\}\}\supseteq\mathbb{R}^{+}.$ Therefore \eqref{identifier} is equivalent to the condition that one of the two sets therein  is $\mathbb{R}^{+}$ while the other is $\{0\}$. Without losing generality, we can assume that $\{|\langle \textbf{g}, \textbf{y}\rangle|: \textbf{y}\in \textbf{\hbox{cone}}(\mathcal{X}_{1})\backslash\{0\}\}=\mathbb{R}^{+}$. This implies that $\mathcal{X}^{T}_{1}\textbf{g}\in \mathbb{R}^{+,m_{1}}$.  In fact, if not, then there exist $(\theta_{1},
\ldots, \theta_{m_{1}})\succ0$ such that $\langle \sum^{m_{1}}_{k=1}\theta_{k}\textbf{x}_{1,k}, \textbf{g}\rangle=0$, which leads to a contraction with the assumption.

\textbf{\textbf{Case of $L>2$.}} Invoking  the result of the  case of $L=2$,   the condition in \eqref{separatingPRgeneral} is equivalent to that
  each sub-union
 $\textbf{\hbox{cone}}(\mathcal{X}_{k})\cup\textbf{\hbox{cone}}(\mathcal{X}_{l})$ is detectable.

\textcolor[rgb]{0.00,0.07,1.00}{ Necessity}:  If the UoC $\bigcup^{L}_{k=1} \textbf{\hbox{cone}}(\mathcal{X}_{k})$ is detectable, then by  Definition \ref{detectable},  each sub-union
 $\textbf{\hbox{cone}}(\mathcal{X}_{k})\cup\textbf{\hbox{cone}}(\mathcal{X}_{l})$ is detectable.

\textcolor[rgb]{0.00,0.07,1.00}{ Sufficiency}:
When \eqref{separatingPR} holds for $k=2$, for example, $\hbox{invim}(\mathcal{R}(\mathcal{X}_{1}^{T})\cap \mathbb{R}^{+,m_{1}})\cap \mathcal{N}(X^{T}_{2})\neq\emptyset$.
We pick vector   $\textbf{g}\in\hbox{invim}(\mathcal{R}(\mathcal{X}_{1}^{T})\cap \mathbb{R}^{+,m_{1}})\cap \mathcal{N}(X^{T}_{2})$.
If $|\langle \textbf{g}, \textbf{z}\rangle|>0,$ then $\textbf{z}\notin\textbf{\hbox{cone}}(\mathcal{X}_{2})$. Conversely, if $|\langle \textbf{g}, \textbf{z}\rangle|=0,$ then $\textbf{z}\notin\textbf{\hbox{cone}}(\mathcal{X}_{1})$.   Now there are two
cases: (a) If $\textbf{z}\notin\textbf{\hbox{cone}}(\mathcal{X}_{1})$, then similarly  we next determine whether $ \textbf{z}\notin\textbf{\hbox{cone}}(\mathcal{X}_{2})$ or $ \textbf{z}\notin\textbf{\hbox{cone}}(\mathcal{X}_{3})$.  (b) If $\textbf{z}\notin\textbf{\hbox{cone}}(\mathcal{X}_{2})$,  then we next need to  determine  whether $ \textbf{z}\notin\textbf{\hbox{cone}}(\mathcal{X}_{1})$ or $ \textbf{z}\notin\textbf{\hbox{cone}}(\mathcal{X}_{3})$. The exclusion procedures can go forward due to  \eqref{separatingPR}.
After $L-1$ exclusions, we can detect the target cone where $\textbf{z}$ lies.

\subsection{The proof Theorem \ref{ghhh}}
The measurements for the recovery \eqref{fastchogngouxy} are contaminated by $\underline{\textbf{\hbox{n}}}=[\underline{\hbox{n}}_{1},\ldots, \underline{\hbox{n}}_{\gamma}]$, namely, the measurements we obtained are
\begin{align}\notag\begin{array}{lllll} \{\widetilde{|\langle \textbf{f}_{k}, \textbf{z}\rangle|}\}^{\gamma}_{k=1}=\{|\langle \textbf{f}_{k}, \textbf{z}\rangle|+\underline{\hbox{n}}_{k} \}^{\gamma}_{k=1}.\end{array}
\end{align}
In the procedure of recovering $\mathfrak{P}f$, the emerging error is
\begin{align}\begin{array}{lllll}\label{Errordd}\hbox{Error}=&
\textcolor[rgb]{0.00,0.07,1.00}{\hbox{FFT}}\Big(\hbox{diag}^{-1}(\textcolor[rgb]{0.00,0.07,1.00}{\hbox{FFT}}((\mathfrak{P}\textsf{f}_{1})^{T}))\\
&\times \textcolor[rgb]{0.00,0.07,1.00}{\hbox{IFFT}}\Big(\left[\begin{array}{lllllllllllllllll}
1&0&0&\cdots&0\\
-\delta_{2}&1&0&\cdots&0\\
\vdots&\vdots&\vdots&\ddots&\vdots\\
-\delta_{\gamma}&0&0&\cdots&1
\end{array}\right]\left[\begin{array}{lllllllllllllllll}
\underline{\hbox{n}}_{1}\\
\underline{\hbox{n}}_{2}\\
\vdots\\
\underline{\hbox{n}}_{\gamma}
\end{array}\right]\Big)\Big).
\end{array}
\end{align}
For any vector $\textbf{z}\in \mathbb{R}^{\gamma}$, it is easy to check that  $||\textcolor[rgb]{0.00,0.07,1.00}{\hbox{FFT}}\textbf{z}||_{2}=\frac{1}{\sqrt{\gamma}}||\textbf{z}||_{2}$
and $||\textcolor[rgb]{0.00,0.07,1.00}{\hbox{IFFT}}\textbf{z}||_{2}=\sqrt{\gamma}||\textbf{z}||_{2}$. By this property, Error in \eqref{Errordd}
is estimated as follows,
\begin{align}\label{hhhhh}\begin{array}{lllllllllllllllll} ||\hbox{Error}||_{2}&\leq  \frac{\sqrt{2||\underline{\textbf{n}}||^{2}_2+\max\{\delta_{2}, \ldots, \delta_{\gamma}\}(\gamma-1)\underline{\hbox{n}}^{2}_{1}}}{\min|\textcolor[rgb]{0.00,0.07,1.00}{\hbox{FFT}}(\mathfrak{P}\textsf{f}_{1})|}.
\end{array}\end{align}
We next estimate the probability \begin{align}\begin{array}{lllllllllllllllll} \label{gxxcc} P_{r}(|\frac{\underline{\hbox{n}}^{2}_{1}+\ldots+\underline{\hbox{n}}^{2}_{\gamma}}{\gamma}-\underline{\hbox{n}}^{2}_{1}|>\epsilon)\\
=P_{r}(|\frac{\underline{\hbox{n}}^{2}_{2}+\ldots+\underline{\hbox{n}}^{2}_{\gamma}}{\gamma-1}-\underline{\hbox{n}}^{2}_{1}|>\frac{\gamma}{\gamma-1}\epsilon)\\
\leq P_{r}(|\frac{\underline{\hbox{n}}^{2}_{2}+\ldots+\underline{\hbox{n}}^{2}_{\gamma}}{\gamma-1}-\sigma^2|+|\sigma^2-\underline{\hbox{n}}^{2}_{1}|>\frac{\gamma}{\gamma-1}\epsilon)\\
\leq P_{r}(|\frac{\underline{\hbox{n}}^{2}_{2}+\ldots+\underline{\hbox{n}}^{2}_{\gamma}}{\gamma-1}-\sigma^2|>\frac{\gamma}{2(\gamma-1)}\epsilon)\\
\quad +
P_{r}(|\sigma^2-\underline{\hbox{n}}^{2}_{1}|>\frac{\gamma}{2(\gamma-1)}\epsilon)\\
=1-\Phi_{\gamma-1}(\gamma-1+\frac{\gamma \epsilon}{2\sigma^{2}})+\Phi_{\gamma-1}(\gamma-1-\frac{\gamma \epsilon}{2\sigma^{2}})\\
\quad +1-\Phi_{1}(1+\frac{\gamma}{2(\gamma-1)\sigma^2}\epsilon)+ \Phi_{1}(1-\frac{\gamma}{2(\gamma-1)\sigma^2}\epsilon).\end{array}\end{align}
Therefore with the probability at least \begin{align}\notag \begin{array}{lllllllllllllllll}-1+\Phi_{\gamma-1}(\gamma-1+\frac{\gamma \epsilon}{2\sigma^{2}})+\Phi_{1}(1+\frac{\gamma}{2(\gamma-1)\sigma^2}\epsilon)\\
-\Phi_{\gamma-1}(\gamma-1-\frac{\gamma \epsilon}{2\sigma^{2}})-\Phi_{1}(1-\frac{\gamma}{2(\gamma-1)\sigma^2}\epsilon),\end{array}\end{align}
it holds that $P_{r}(|\frac{\underline{\hbox{n}}^{2}_{1}+\ldots+\underline{\hbox{n}}^{2}_{\gamma}}{\gamma}-\underline{\hbox{n}}^{2}_{1}|\leq\epsilon)$.
By \eqref{hhhhh} and \eqref{gxxcc}, with at least the above probability, it holds that
\begin{align}\label{hhhnew}\begin{array}{lllllllllllllllll} ||\hbox{Error}||_{2}&\leq  \frac{\sqrt{2||\underline{\textbf{n}}||^{2}_2+\max\{\delta_{2}, \ldots, \delta_{\gamma}\}[(\gamma-1)\epsilon+\frac{\gamma-1}{\gamma}||\underline{\textbf{n}}||^{2}_2]}}{\min|\textcolor[rgb]{0.00,0.07,1.00}{\hbox{FFT}}(\mathfrak{P}\textsf{f}_{1})|}.
\end{array}\end{align}

\subsection{The proof Theorem \ref{tuilun}}
Suppose that  the following event  \begin{align}\label{fffff} \begin{array}{lllllllllllllllll}\underline{\hbox{n}}< \frac{r}{2}\min \mathcal{X}^{T}_{1}\textbf{g}\end{array}\end{align}
holds.
If $\textbf{z}\in \textbf{\hbox{cone}}(\mathcal{X}_{1})$, then
\begin{align}\begin{array}{lllllllllllllllll}\label{kkks1}\widetilde{|\langle \textbf{g}, \textbf{z}\rangle|}\\
=|\langle \textbf{g}, \textbf{z}\rangle|+\underline{\hbox{n}}\geq (\theta_{1}+\ldots+\theta_{m_{k}})\min \mathcal{X}^{T}_{1}\textbf{g}-
|\underline{\hbox{n}}|\\
\geq \frac{r}{2}\min \mathcal{X}^{T}_{1}\textbf{g}.\end{array}\end{align}
On the other hand, if  $\textbf{z}\in \textbf{\hbox{cone}}(\mathcal{X}_{2})$,  then
\begin{align}\label{kkks2}\begin{array}{lllllllllllllllll} \widetilde{|\langle \textbf{g}, \textbf{z}\rangle|}=\underline{\hbox{n}}< \frac{r}{2}\min \mathcal{X}^{T}_{1}\textbf{g}.\end{array}\end{align}
It follows from \eqref{kkks1} and \eqref{kkks2} that the target cone can be successfully detected by the modified detection  strategy.
On the other hand, \eqref{fffff} holds with the probability $\int^{\frac{r}{2}\min \mathcal{X}^{T}_{1}\textbf{g}}_{-\infty}\frac{1}{\sqrt{2\pi}\sigma}e^{-\frac{(x-\mu)^{2}}{2\sigma^{2}}}dx$. The proof is concluded.

\tabcolsep 30pt

\vspace*{7pt}

\end{document}